\newtheorem{observation}{Observation}
\DeclarePairedDelimiter\ceil{\lceil}{\rceil}
\renewcommand{\vec}[1]{\boldsymbol{\mathrm{#1}}}
\DeclareMathOperator*{\maximize}{maximize}
\DeclareMathOperator{\argmax}{argmax}
\DeclareMathOperator{\argmin}{argmin}
\providecommand{\vx}{\ensuremath{\vec{x}}}
\definecolor{mylinkcolor}{RGB}{0,0,140}
\begin{document}
%%
%% The "title" command has an optional parameter,
%% allowing the author to define a "short title" to be used in page headers.
\title{The Generalized Mean Densest Subgraph Problem}

\begin{abstract}
	Finding dense subgraphs of a large graph is a standard problem in graph mining that has been studied extensively both for its theoretical richness and its many practical applications. In this paper we introduce a new family of dense subgraph objectives, parameterized by a single parameter $p$, based on computing generalized means of degree sequences of a subgraph. Our objective captures both the standard densest subgraph problem and the maximum $k$-core as special cases, and provides a way to interpolate between and extrapolate beyond these two objectives when searching for other notions of dense subgraphs. 
	In terms of algorithmic contributions, we first show that our objective can be minimized in polynomial time for all $p \geq 1$ using repeated submodular minimization. A major contribution of our work is analyzing the performance of different types of peeling algorithms for dense subgraphs both in theory and practice. We prove that the standard peeling algorithm can perform arbitrarily poorly on our generalized objective, but we then design a more sophisticated peeling method which for $p \geq 1$ has an approximation guarantee that is always at least $1/2$ and converges to 1 as $p \rightarrow \infty$. In practice, we show that this algorithm obtains extremely good approximations to the optimal solution, scales to large graphs, and highlights a range of different meaningful notions of density on graphs coming from numerous domains. Furthermore, it is typically able to approximate the densest subgraph problem better than the standard peeling algorithm, by better accounting for how the removal of one node affects other nodes in its neighborhood.
\end{abstract}

\author{Nate Veldt}
\affiliation{
	\institution{Cornell University}
	\department{Center for Applied Mathematics}
}
\email{nveldt@cornell.edu}

\author{Austin R.~Benson}
\affiliation{%
	\institution{Cornell University}
	\department{Department of Computer Science}
}
\email{arb@cs.cornell.edu}

\author{Jon Kleinberg}
\affiliation{%
	\institution{Cornell University}
	\department{Department of Computer Science}
}
\email{kleinberg@cornell.edu}

\maketitle
%%
%% The abstract is a short summary of the work to be presented in the
%% article.

%!TEX root = ../vbk-main.tex

\section{Introduction}
Detecting densely connected sets of nodes in a graph is a basic graph mining primitive~\cite{gionis2015dense,lee2010survey}. The problem is related to graph clustering, but differs in that it only consider the internal edge structure of a subgraph, and not the number of external edges connecting it to the rest of the graph. Algorithms and hardness results for dense subgraph discovery have been studied extensively in theory~\cite{charikar2000greedy,lee2010survey,andersen2009finding} 
and applied in a wide array of applications, including 
discovering DNA motifs~\cite{fratkin2006motifcut}, 
finding functional modules in gene co-expression networks~\cite{hu2005mining}  and communities in social networks~\cite{sozio2010community},
identifying trending topics in social media~\cite{angel2012dense},
classifying brain networks~\cite{lanciano2020explainable}, and
various other data mining tasks~\cite{sariyuce2018,tsourakakis2015k,shin2018patterns,zhang2017hidden}.

The typical approach for detecting dense subgraphs is to set up and solve (or at least approximate) a combinatorial optimization problem that encodes some measure of density for each node set in a graph. One simple measure of density is the number of edges in a subgraph divided by the number of pairs of nodes. This density measure plays an important role in certain clustering objectives~\cite{veldt2018correlation}, but by itself it is not meaningful to optimize, since a single edge maximizes this ratio. If, however, one seeks a maximum sized subgraph such that this type of density is equal to one or bounded below by a constant, this corresponds to the maximum clique and maximum quasiclique problem respectively, which are both NP-hard~\cite{pattillo2013maximum,karp1972reducibility}. 
There also exists a wealth of different dense subgraph optimization problems that are nontrivial yet polynomial time solvable. 
The most common is the densest subgraph problem, which seeks a subgraph maximizing the ratio between the number of induced edges and the number of nodes. 
Another common but seemingly quite different notion of density is to find a $k$-core of a graph~\cite{seidman1983kcore}, which is a maximal connected subgraph in which all induced node degrees are at least $k$. The maximum value of $k$ for which the $k$-core of a graph is non-empty is called the degeneracy of the graph. Throughout the text, we will use the term \emph{maxcore} to refer to the $k$-core of a graph when $k$ is the degeneracy.
%\arbnote{Is there a good reference for the above problems (clique, quasiclique, kcore, ...)?}

Given the wide variety of existing applications for dense subgraph discovery, it is no surprise that a plethora of different combinatorial objective functions have been considered in practice beyond the objectives mentioned above~\cite{tsourakakis2013denser,tsourakakis2015k,sariyuce2015finding,sariyuce2018local,farago2008general,qin2015locally,tsourakakis2019novel,feige2001dense,lee2010survey,gionis2015dense}. However, as a result, it can be challenging to navigate the vast landscape of existing methods and evaluate tradeoffs between them in practice. In many cases, it is useful to uncover a range of different types of dense subgraphs in the same graph, but here again there is a challenge of knowing in which dimension or specific notion of density one would like to see variation. 

In light of these challenges, we define a simple, single-parameter family of objective functions for dense subgraph discovery. 
Our framework lets us (i) evaluate the tradeoffs between existing methods and (ii) uncover a hierarchy of dense subgraphs in the same graph, specifically in terms of the degree distribution of nodes in an induced subgraph. The family of objectives is motivated by a simple observation: the maxcore of a graph is the subgraph of maximum \emph{smallest} induced degree, while the standard densest subgraph is the one of maximum \emph{average} induced degree. Given that both of these are well-known and oft-used, 
it is natural to ask what other functions of induced node degrees are meaningful to optimize. 

To answer this, we introduce the \emph{generalized mean densest subgraph problem}, which is parameterized by a single parameter $p \in \mathbb{R}\cup \{-\infty, \infty\}$. We show that the maxcore and standard densest subgraph problems are obtained as special cases when $p = -\infty$ and $p = 1$, respectively.
Another existing notion of dense subgraphs that places a higher emphasis on large degrees is recovered when $p = 2$~\cite{farago2008general}.
And the limiting case of $p = \infty$ is solved by simply returning the entire graph.
Other values of $p$ lead to objectives that have not been considered previously, but correspond to natural alternative objectives for dense subgraph discovery. For example, in the case of $p = 0$, our problem is equivalent to finding a subgraph that maximizes the average logarithm of degree. Figure~\ref{fig:landscape} summarizes the objective function landscape captured by our framework.

\begin{figure}[t]
	\centering
	\includegraphics[width=\linewidth]{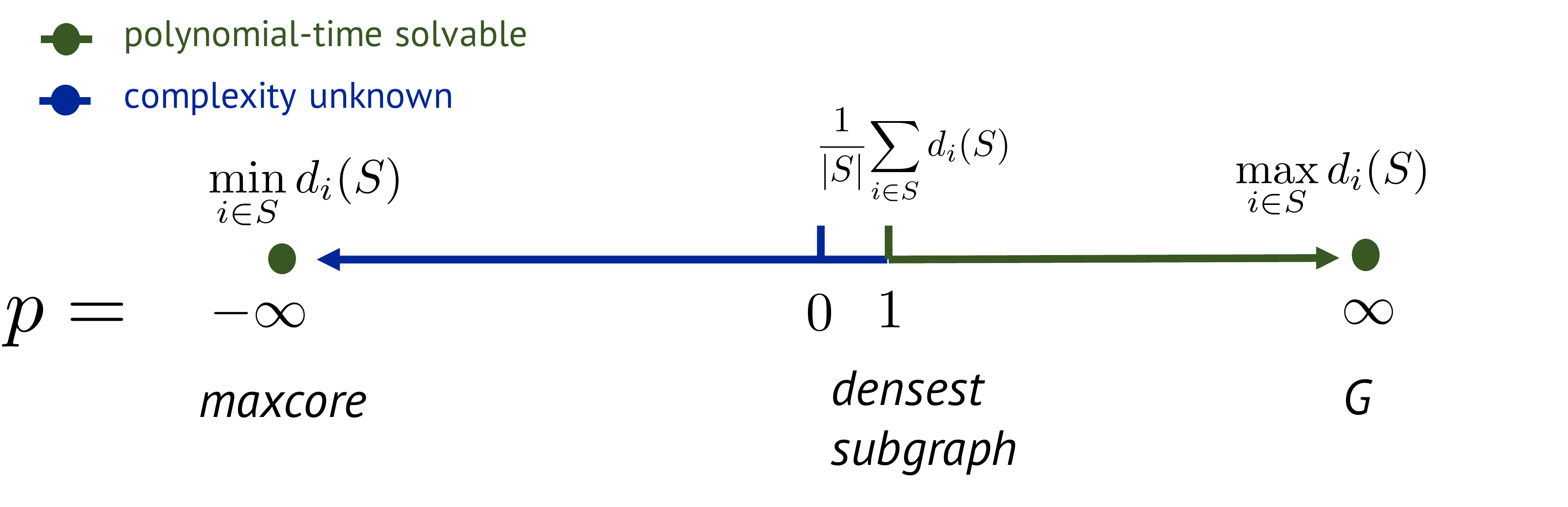}
	\caption{Summary of the objective function landscape and algorithmic results for our newly introduced generalized mean densest subgraph problem. The maxcore and densest subgraph problems are special cases of this simple parameterized objective. For $p\geq 1$, we give an optimal algorithm based on submodular minimization, and a fast greedy algorithm returning a $1/(p+1)^{1/p}$-approximation.
	}
	\label{fig:landscape}
\end{figure}

In addition to unifying a number of previous approaches, our framework is accompanied by several novel theoretical guarantees and algorithmic techniques for finding dense subgraphs. We first prove that our objective is polynomial-time solvable when $p \geq 1$, giving an algorithm based on repeated submodular minimization. 
We then present a variety of theoretical and empirical results on peeling algorithms for dense subgraph discovery in the context of our objective. These algorithms are approximation algorithms but are much faster than submodular minimization. They rely on repeatedly removing a single vertex at a time in order to shrink a graph down into a denser subgraph. The standard approach for peeling iteratively removes the node of smallest degree~\cite{asahiro2000greedily}. Surprisingly, we prove that despite seeming like a natural approach, this well-known standard peeling algorithm, which optimally solves the $p =-\infty$ objective~\cite{matula1983smallest} and provides a $1/2$-approximation for the $p = 1$ objective~\cite{charikar2000greedy,khuller2009finding}, can yield arbitrarily bad results when $p > 1$. 
However, we then design a more sophisticated but still fast peeling algorithm that is guaranteed to return a $1/(1+p)^{1/p}$ approximation for any $p \geq 1$. Although we prove that this bound is asymptotically tight, this approximation guarantee is always at least $1/2$, and converges to $1$ as $p \rightarrow \infty$. 
When $p = 1$, both the method and its guarantee reduce to that of the standard peeling algorithm.

%\arbnote{Maybe say something on how the peeling algorithm works, so that the unfamiliar reader can follow along?}

 In order to greedily optimize the $p > 1$ version of our objective, our method must take into account not only how removing a node affects its own degree (i.e., it disappears), but also how this affects the node's neighbors in the graph and their contribution to the generalized objective. This gives our method a certain level of ``foresight'' when removing nodes, that is not present in the strategy of the standard peeling algorithm. 
 In light of this observation, our framework contains both theoretical and empirical examples where our new peeling method can outperform the standard peeling algorithm in finding dense subgraphs. For example, on many real-world graphs, we find that running our method with a value of $p$ slightly larger than 1 will typically produce sets with a better average degree than standard peeling, even though our method is technically greedily optimizing a different objective.
 
 We apply our methods on a range of different sized graphs from various domains, including social networks, road networks, citation networks, and web networks. We show that for small graphs with up to 1000 nodes, we can optimally find $p$-mean densest subgraphs using submodular minimization, and that on these graphs our approximation algorithm does a much better job approximating the optimal sets than its theory guarantees.  Our greedy peeling algorithm is also fast and scales to large datasets, and is able to uncover different meaningful notions of dense subgraphs in practice as we change our parameter $p$. 
 %\arbnote{Maybe a short paragraph saying that we apply these to empirical datasets and find that different $p$ produce different subgraphs?}

In summary, we present the following contributions.
\begin{itemize}[topsep=0pt,leftmargin=*]
	\item We introduce the generalized mean densest subgraph problem and show that the well-studied maxcore and densest subgraph problems are special cases.
	\item We give a polynomial-time algorithm for optimally solving the objective for any $p \in [1, \infty]$, by showing that the decision version of the problem can be solved via submodular minimization.
	Existing linear-time algorithms solve the $p = -\infty$ case, which reduces to maxcore.
	\item We provide a faster greedy approximation algorithm that returns a $1/(p+1)^{1/p}$ approximation for $p \geq 1$. We show a class of graphs for which this approximation is tight. As $p \rightarrow \infty$, this approximation converges to 1. We also prove that for any $p > 1$, the greedy algorithm for the standard densest subgraph problem returns arbitrarily bad approximations on certain graph classes.
	\item We use our framework and methods to identify a range of different types of dense subgraphs and compare the performance of different peeling algorithms, on graphs coming from a wide variety of domains. 
\end{itemize}

%
%!TEX root = ../vbk-main.tex

\section{Technical Preliminaries}

%\begin{itemize}
%	\item Notation
%	\item Densest subgraph
%	\item Degeneracy 
%	\item Generalized Mean
%\end{itemize}

Let $G = (V,E)$ be an undirected and unweighted graph. For $v \in V$, let $\mathcal{N}(v) = \{u \in V \colon (u,v) \in E\}$ denote the neighborhood of node $v$, and $d_v = |\mathcal{N}(v)|$ be its degree. For a set $S \subset V$, let $E_S$ denote the set of edges joining nodes in $S$ and $d_v(S) = |\mathcal{N}(v) \cap S|$ be the degree of $v$ in the subgraph induced by $S$. If $v \notin S$, then $d_v(S) = 0$. 

\paragraph{Dense Subgraph Problems}
%The density of a set $S \subseteq V$ is the ratio between the number of edges and nodes in the induced subgraph:
The densest subgraph problem seeks a set of nodes $S$ that maximizes the ratio between the number of edges and nodes in the subgraph induced by $S$:
\begin{equation}
\label{density}
\maximize_{S \subseteq V} \frac{|E_S|}{|S|}.
\end{equation} 
 The numerator of this objective is equal to half the sum of induced node degrees in $S$, and therefore this problem is equivalent to finding the maximum average degree subgraph:
\begin{equation}
\label{maxavgdegree}
\max_{S \subseteq V} \frac{\sum_{v \in S} d_v(S)}{|S|} = \max_{S \subseteq V} \textbf{avg}_{v \in S}\, d_{v}(S).
\end{equation}
This problem is known to have a polynomial-time solution~\cite{goldberg1984finding,gallo1989fast}, as well as a fast greedy peeling algorithm that is guaranteed to return a $1/2$-approximation~\cite{khuller2009finding,charikar2000greedy}. 

Another well-studied dense subgraph problem is to find the $k$-core of a graph for a positive integer $k$~\cite{dorogovtsev2006k,carmi2007model,shin2016corescope,malliaros2020core}. 
The $k$-core of a graph is a maximal connected subgraph in which all nodes have degree at least $k$. The maximum value of $k$ for which the $k$-core of a graph is non-empty is called the \emph{degeneracy} of the graph. If $k^*$ is the degeneracy of $G$, we will refer to the $k^*$-core of $G$ as the \emph{maxcore}. Finding the maxcore is equivalent to finding the subset $S$ that maximizes the minimum induced degree:
\begin{equation}
\label{maxcore}
\max_{S \subseteq V}  \min_{v \in S} \, d_v(S).
\end{equation}
For any value of $k$, the $k$-core of a graph can be found in linear time via a greedy peeling algorithm
that repeatedly removes nodes with degree less than $k$~\cite{matula1983smallest}.

\paragraph{Generalized Means}
For a vector of positive real numbers $\textbf{x} = \begin{bmatrix} x_1 & x_2 & \cdots & x_n \end{bmatrix} \in \mathbb{R}^n_+$, the generalized mean (power mean or $p$-mean) with exponent $p$ of $\textbf{x}$ is defined to be
\begin{equation}
\label{genmean}
 M_p(\vx) =\left( \frac{1}{n} \sum_{i = 1}^n (x_i)^p\right)^{1/p}.
\end{equation}
For $p \in \{\infty, -\infty, 0\}$, the mean can be defined by taking limits, so that $M_{\infty}(\vx) = \min_i x_i$, $M_{-\infty} = \max_i x_i$, and $M_0(\vx) = (x_1x_2\cdots x_n)^{1/n}$ (the $p = 0$ case is often called the geometric mean).

\paragraph{Submodularity and the power function}
We review a few useful preliminaries on submodular functions and properties of the power function 
$\phi(x) = \vert x \rvert^p$ for $p \geq 0$ that will be useful for our algorithmic results. 
For a discrete set $\Omega$, a function $f \colon \Omega \rightarrow \mathbb{R}$ is submodular if for any subsets $A, B \subseteq \Omega$ it satisfies
\begin{equation}
\label{submod}
f(A) + f(B) \geq f(A\cap B) + f(A \cup B).
\end{equation}
If the above inequality is reversed, $f$ is referred to as a \emph{supermodular} function, and if we replace it with equality, the function is called \emph{modular}. We have the following two simple observations about properties of the power function,
%We omit the proofs, both of 
which follow from the fact that $\phi$ is concave when $p \in [0,1]$, and convex when $p \geq 1$. 
\begin{observation}
	\label{obs:g}
	The function $g(A) = |A|^p$ for $A \in \Omega$ is submodular when $p \in [0,1]$ and supermodular when $p \geq 1$. 
\end{observation}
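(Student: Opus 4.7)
The plan is to reduce the four-set inequality defining (super)modularity to a one-dimensional statement about the power function $\phi(x) = x^p$ on the nonnegative reals, and then appeal to the fact that $\phi$ is concave for $p \in [0,1]$ and convex for $p \geq 1$.

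First, for arbitrary $A, B \subseteq \Omega$ I would parametrize the sizes by $a = |A \cap B|$, $s = |A \setminus B|$, and $t = |B \setminus A|$, so that $|A| = a+s$, $|B| = a+t$, and $|A \cup B| = a+s+t$. Substituting into $g(S) = \phi(|S|)$ yields the identity
\begin{equation*}
g(A) + g(B) - g(A \cap B) - g(A \cup B) = \bigl[\phi(a+s) - \phi(a)\bigr] - \bigl[\phi(a+s+t) - \phi(a+t)\bigr].
\end{equation*}
The observation thus reduces to comparing the increment of $\phi$ over the interval $[a, a+s]$ with its increment over the right-shifted interval $[a+t, a+s+t]$ of the same length $s$.

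Next, I would invoke the standard principle that for a concave function on $[0,\infty)$ the increment over any interval of fixed length is nonincreasing in the left endpoint, while for a convex function it is nondecreasing (equivalently, the divided difference $[\phi(y) - \phi(x)]/(y-x)$ is monotone in both arguments with the appropriate sign). Since $\phi(x) = x^p$ is concave on $[0,\infty)$ for $p \in [0,1]$ and convex there for $p \geq 1$, the right-hand side of the displayed identity is respectively nonnegative (giving submodularity) and nonpositive (giving supermodularity), which is exactly what the observation claims.

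I do not anticipate a serious obstacle: the statement is essentially Jensen's inequality at the level of finite differences, and the substitution handles arbitrary $A, B$ uniformly. The only edge case to note is $p = 0$, where under the convention $0^0 := 1$ the function $g$ is constantly equal to $1$ and is trivially both sub- and supermodular, consistent with the two limiting claims meeting at $p = 0$.
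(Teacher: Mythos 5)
Your proposal is correct and follows exactly the route the paper intends: the paper simply asserts that the observation ``follows from the fact that $\phi$ is concave when $p \in [0,1]$ and convex when $p \geq 1$,'' and your parametrization $|A\cap B| = a$, $|A\setminus B| = s$, $|B\setminus A| = t$ together with the monotonicity of increments of concave/convex functions is the standard way to make that one-line justification rigorous. Nothing is missing; the $p=0$ edge-case remark is a nice touch but not essential.
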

%We will additionally use the following simple fact, which again can be shown using the fact that $\phi$ is convex when $p \geq 1$, meaning that lines tangent to $\phi$ lie completely below the curve.
\begin{observation}
	\label{obs:power}
	Let $p \geq 1$ and $x \geq 1$. Then $$p(x-1)^{p-1} \leq x^p - (x-1)^p \leq px^{p-1}.$$
\end{observation}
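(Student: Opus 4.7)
The plan is to prove both inequalities simultaneously by writing $x^p - (x-1)^p$ as an integral of the derivative of $t \mapsto t^p$, and then bounding that integrand using the fact that $t^{p-1}$ is monotone nondecreasing on $[0,\infty)$ when $p \geq 1$.

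First I would write
\[
x^p - (x-1)^p \;=\; \int_{x-1}^{x} p\,t^{p-1}\,dt,
\]
which is just the fundamental theorem of calculus applied to the differentiable function $\phi(t) = t^p$ (with $\phi'(t) = p t^{p-1}$) on the interval $[x-1,x]$; since we are given $x \geq 1$, this interval lies in $[0,\infty)$, so everything is well-defined.

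Next, since $p \geq 1$, the exponent $p-1 \geq 0$, and hence the function $t \mapsto t^{p-1}$ is nondecreasing on $[0,\infty)$. Therefore on the interval $[x-1, x]$ of length $1$ we have the pointwise bounds
\[
p\,(x-1)^{p-1} \;\leq\; p\,t^{p-1} \;\leq\; p\,x^{p-1}.
\]
Integrating each part over $t \in [x-1, x]$ (an interval of length $1$) gives exactly
\[
p(x-1)^{p-1} \;\leq\; x^p - (x-1)^p \;\leq\; p x^{p-1},
\]
which is the desired statement. An equivalent one-line argument would be to apply the Mean Value Theorem to $\phi(t) = t^p$ on $[x-1, x]$ to get some $\xi \in [x-1, x]$ with $x^p - (x-1)^p = p\xi^{p-1}$, and then sandwich $\xi^{p-1}$ between $(x-1)^{p-1}$ and $x^{p-1}$ using monotonicity of $t^{p-1}$.

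There is no real obstacle here — the only subtle point is to ensure that $t^{p-1}$ is meaningfully defined and monotone at the left endpoint $t = x-1$, which could equal $0$; this is precisely where the hypothesis $x \geq 1$ (together with $p \geq 1$, so $p-1 \geq 0$) is used, since then $(x-1)^{p-1}$ is a nonnegative real number (with the convention $0^0 = 1$ when $p = 1$) and monotonicity of $t^{p-1}$ holds on the whole interval $[x-1, x] \subseteq [0, \infty)$.
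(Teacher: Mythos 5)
Your proof is correct and matches the paper's intent: the paper simply notes that the observation ``follows from'' convexity of $\phi(t)=t^p$ for $p\geq 1$, and your argument (bounding $x^p-(x-1)^p=\int_{x-1}^{x}p\,t^{p-1}\,dt$ via monotonicity of the derivative, equivalently the mean value theorem) is exactly the standard way to instantiate that convexity claim. Your attention to the endpoint case $x=1$, $p=1$ is a nice touch but nothing more needs to be said.
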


%This generalization of the standard mean has broad applications in mathematics; in our work we will consider a new way to apply the generalized mean to define new notions of den

\section{The $p$-Mean Densest Subgraph}
We now introduce a new single-parameter family of dense subgraph objectives based on generalized means of degree sequences. 
%For a graph $G = (V,E)$ and node set $S \subseteq V$, we abuse notation slightly to define the $p$-density of $S$ as follows
Abusing notation slightly, we define the $p$-density of $S \subseteq V$ to be
\begin{equation}
\label{subgraphpmean}
{ M_p(S) = \left( \frac{1}{|S|} \sum_{v \in S} [d_v(S)]^p\right)^{1/p}.}
\end{equation}
The \emph{generalized mean densest subgraph problem} is then to find a set of node $S$ that maximizes $M_p(S)$. We  also refer to this as the $p$-mean densest subgraph problem to make the parameter $p$ explicit. It is worthwhile to note that this objective increases monotonically with $p$, ranging from the minimum degree of $S$ when $p = -\infty$ to the maximum degree of $S$ when $p = \infty$.
%\begin{equation}
%\label{eq:pmeanobj}
%\max_{S \subseteq V} \, M_p(S).
%\end{equation}
%

\subsection{Equivalence Results and Special Cases}
The definition of generalized mean, and the characterizations of the densest subgraph and max-core problem given in~\eqref{maxavgdegree} and~\eqref{maxcore}, immediately imply the following equivalence results.
\begin{lemma}
The standard densest subgraph problem~\eqref{maxavgdegree} is equivalent to the $1$-mean densest subgraph problem. The maxcore objective is equivalent to the $-\infty$-mean densest subgraph problem. 
\end{lemma}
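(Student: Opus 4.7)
The plan is to prove both equivalences by direct substitution into the definition~\eqref{subgraphpmean} of $M_p(S)$, relying on the two characterizations~\eqref{maxavgdegree} and~\eqref{maxcore} that are already in hand. No new ideas are required beyond carefully matching formulas.

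For the first claim, I would simply set $p = 1$ in~\eqref{subgraphpmean} to obtain
\[
M_1(S) \;=\; \left(\frac{1}{|S|}\sum_{v \in S} d_v(S)\right)^{\!1} \;=\; \frac{1}{|S|}\sum_{v \in S} d_v(S),
\]
which is exactly $\textbf{avg}_{v \in S}\, d_v(S)$ on the right-hand side of~\eqref{maxavgdegree}. Hence maximizing $M_1(S)$ over $S \subseteq V$ is identical as an optimization problem to the standard densest subgraph problem.

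For the second claim, I would use the standard generalized-means identity that, for positive reals $x_1, \ldots, x_n$,
\[
\lim_{p \to -\infty} \left(\tfrac{1}{n}\sum_{i=1}^n x_i^p\right)^{\!1/p} \;=\; \min_{i} x_i,
\]
and apply it to the induced-degree sequence $\{d_v(S)\}_{v \in S}$. This gives $M_{-\infty}(S) = \min_{v \in S} d_v(S)$, and maximizing this quantity over $S$ is exactly the characterization~\eqref{maxcore} of the maxcore problem.

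The only mildly delicate point, and the place I would be most careful, is that the $p \to -\infty$ limit is defined for positive values whereas $d_v(S)$ can be zero if $S$ contains an isolated vertex. I would handle this by adopting the convention that $M_{-\infty}(S) := \min_{v \in S} d_v(S)$ directly, observing that any $S$ containing an isolated vertex has minimum induced degree zero and therefore cannot be strictly better for~\eqref{maxcore} than some $S'$ obtained by deleting the isolated vertices. With this convention the limiting identification is valid for every $S \subseteq V$, and the equivalence follows. I do not anticipate any genuine mathematical obstacle beyond this bookkeeping, since the result is essentially a restatement of definitions.
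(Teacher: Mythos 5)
Your proof is correct and follows exactly the route the paper intends: the paper gives no explicit proof, stating only that the lemma is ``immediately implied'' by the definition of the generalized mean together with the characterizations~\eqref{maxavgdegree} and~\eqref{maxcore}, which is precisely the substitution and limit argument you spell out. Your extra care with zero induced degrees (and your use of the correct convention $M_{-\infty}=\min$, which the paper's preliminaries accidentally swap with $M_{\infty}$) is a sensible addition, not a deviation.
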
	
Although studied less extensively, an equivalent variant of the $2$-mean densest subgraph has also been previously considered~\cite{farago2008general} as a way to find dense subgraphs that place a higher emphasis on the largest degrees. Our new objective therefore unifies existing dense subgraph problems, and suggests a natural way to interpolate them as well as find new meaningful notions of dense subgraphs. %Next we consider several special cases and interesting equivalence results. 

For finite $p > 0$, maximizing $M_p(S)$ is equivalent to maximizing $[M_p(S)]^p$. In other words, the $p$-mean densest subgraph problem seeks a subgraph with the highest average $p$th-power degree:
\begin{equation}
\label{fp}
{f_p(S) = \sum_{v \in S} \frac{d_v(S)^p}{|S|}.}
\end{equation}
For $p = \infty$, objective~\eqref{fp} is no longer meaningful, but from the standard definition of $\infty$-mean we know that $M_\infty(S) = \lim_{p\rightarrow \infty} M_p(S)$ is simply the maximum degree of the induced subgraph. 
%In other words, the $\infty$-mean densest subgraph problem simply seeks a subgraph that maximizes the maximum induced degree. 
This makes intuitive sense given that objective~\eqref{fp} places higher and higher emphasis on large degrees as $p \rightarrow \infty$. The $p = \infty$ objective is trivially solved by taking the entire graph $G$, though this will not necessarily be optimal for large but finite values of $p$. 

When $p = 0$, the generalized mean coincides with the geometric mean, which for our dense subgraph framework means that
\begin{equation}
\label{eq:p0}
  M_{0}(S) = \left( \prod_{v \in S} d_v(S)  \right)^{1/|S|}.
\end{equation}
As long as $S$ does not include zero degree nodes, maximizing this quantity is equivalent to maximizing $\log M_{0}(S)$, which amounts to finding the subgraph with maximum average log-degree:
\begin{equation}
\label{eq:logdegree}
{  f_0(S) = \log M_{0}(S)= \sum_{v \in S} \frac{\log d_v(S)}{|S|}.}
\end{equation}
%To our knowledge, this objective has not been studied previously in practice. Intuitively, the objective could be used to find dense subgraphs that may not be as dense as the standard densest subgraph, but nevertheless rewards having high degree nodes, and discards degeneracies such as zero-degree nodes. 
While natural, this objective has not been previously considered for dense subgraph problems.

Finally, for finite $p < 0$, maximizing $M_p(S)$ is equivalent to maximizing $(M_p(S))^{-p}$, which equals
\begin{equation}
\label{eq:pnegative}
%\max_{S \subseteq V} 
(M_p(S))^{-p} = \left( \frac{\sum_{v \in S} [d_v(S)]^p}{|S|} \right)^{-1} = \left( \text{avg}_S  [d_v(S)]^p\right)^{-1}.
\end{equation}
In other words, while the $p = 1$ objective maximizes the average degree, the $p = -1$ objective seeks the to maximize one over the average inverse degree (corresponding to the harmonic mean of the degrees). 
For other $p < 0$ the goal is to maximize one over the average $p$th power of the degrees. 

\subsection{Comparison with Existing Objectives}
%Other generalizations of the densest subgraph problem have also been considered in previously work. 
Before moving on we compare and contrast our $p$-mean objective against similar generalizations of the densest subgraph problem. 
See Section~\ref{sec:related} for more related work.  

\textbf{The $k$-clique densest subgraph.} Tsourakakis~\cite{tsourakakis2015k} previously introduced the $k$-clique densest subgraph problem ($k$-DS), which seeks a set of nodes $S$ that minimizes the ratio between the number of $k$-cliques and the number of nodes in $S$. The standard densest subgraph problem is recovered when $k =2$. The author showed that this objective can be maximized in polynomial time for any fixed value of $k$, and also gave a $1/k$-approximate peeling algorithm by generalizing the standard greedy peeling algorithm~\cite{charikar2000greedy,asahiro2000greedily}. In practice, even when $k=3$, maximizing this objective can produce subgraphs that are much closer to being near cliques than the standard densest subgraph solution.

\textbf{$\textbf{F}$-density.} An even more general objective called $\textbf{F}$-density was introduced by Farag\'{o}~\cite{farago2008general}. Given a family of graphs $\textbf{F}$, this objective seeks a set $S$ that maximizes the ratio between the number of instances of $\textbf{F}$-graphs in $S$, and $|S|$. The $k$-DS problem is recovered when $\textbf{F}$ includes only the clique on $k$ nodes. Interestingly, Farag\'{o}~\cite{farago2008general} showed that when $\textbf{F} = \{P_2, P_3\}$, where $P_i$ is the path graph on $i$ nodes, then the $\textbf{F}$-density is the following ratio:
\begin{equation}
\label{farago2}
{  \frac{1}{2} \frac{\sum_{v \in S} d_v(S)^2}{|S|}.}
\end{equation}
The maximizers of this objective are the same as those for the $2$-mean densest subgraph problem in our framework, 
although there are differences in terms of approximation guarantees of algorithms.

\textbf{Discounted average degree and $\textbf{f}$-densest subgraph.} The above objectives generalize the standard densest subgraph problem by changing the \emph{numerator} of the objective. Generalizing in a different direction, Kawase and Miyauchi~\cite{kawase2018} introduced the $f$-densest subgraph problem, which seeks a set $S \subseteq V$ maximizing $|E_S|/f(|S|)$ for a convex or concave function $f$. This includes the special case $f(S) = |S|^{\alpha}$ for $\alpha > 0$, which generalizes the earlier notion of \emph{discounted average degree} considered by Yanagisawa and Hara~\cite{yanagisawa2018}. When $f$ is concave, maximizing the objective will always produce output sets that are larger than or equal to optimizers for the densest subgraph problem. Convex $f$ produces outputs sets that are always smaller than or equal to densest subgraph solutions~\cite{kawase2018}.
%Although the denominator of this objective will match the denominator of our $p$-mean densest subgraph when $p > 0$, differences in the numerator lead to substantial differences in computational complexity and in the output sets that maximizes these objectives.

\textbf{Comparison with $p$-mean densest subgraph.}
The $p$-mean densest subgraph problem is similar to the above objectives in that they are all parameterized generalizations of the densest subgraph problem. However, each generalizes the objective in a different direction.
The $k$-clique densest subgraph problem parameterizes preferences for obtaining clique-like subgraphs, while $\textbf{F}$-density more generally controls the search for dense subgraphs that are rich in terms of specified subgraph patterns (graphlets). The discounted average degree and $f$-densest subgraph objectives reward set sizes differently, while keeping the same emphasis on counting edges in the induced subgraph.
Meanwhile, our objective encapsulates different preferences in terms of induced node degrees. Many subcases amount to maximizing averages of different functions on node degrees.
Importantly, the max-core problem is not captured as a special case of $\textbf{F}$-density, $k$-DS, or the $f$-densest subgraph. 

Our objective and the $f$-densest subgraph problem are tangentially similar in that they both can involve the power function $f(x) = |x|^\alpha$ in some way. However, a major difference is that the latter objective shares the same numerator (the induced edge count $|E_S|$) as the standard densest subgraph, which is not the case for our problem. This leads to substantial differences in terms of the computational complexity and the output sets that maximize these objectives. For example, the $f$-densest subgraph with $f(x) = |x|^\alpha$ is NP-hard when $\alpha > 1$ and is guaranteed produce smaller output sets than the standard densest subgraph in this case, and is also known to have trivial constant-sized optimal solutions when $\alpha > 2$~\cite{kawase2018}. In contrast, $p \geq 1$ is in fact the easier regime for our problem and will often, but not always, produce larger output sets. 
%When $\alpha \leq 1$, the $f$-densest subgraph can be solved optimally using a reduction to a maximum flow. Although our problem can be reduced to submodular minimization for a certain regime ($p \geq 2$), it does not correspond to the 
%For concave $f$, the problem is polynomial-time solvable by solving a linear program or repeatedly solving maximum-flow problems, but these 
Finally, we note that the relationship between $\textbf{F}$-density and $2$-mean density does not appear to generalize to other values of $p$, even if we restrict to positive integer values of $p$.
In particular, by checking a few small examples, it is easy to confirm that the $3$-mean density objective is not equivalent to $\textbf{F}$-density when $\textbf{F} = \{P_2, P_3, P_4\}$.

\section{Algorithms}
We now present new algorithms for the $p$-mean densest subgraph problem. We show that the objective can be solved in polynomial time for any $p \geq 1$ via submodular minimization. For the same parameter regime, we show that the standard greedy peeling algorithm~\cite{asahiro2000greedily,charikar2000greedy,khuller2009finding} can return arbitrarily bad results, but a more sophisticated generalization yields a $(1+p)^{1/p}$ approximation, which is also tight. Algorithms for $p \in (-\infty, 1)$ appear more challenging---both our submodular minimization technique and greedy approximation do not hold in this case. Improved algorithms or hardness results for this regime are a compelling avenue for future research.

\subsection{An Optimal Algorithm for $p \geq 1$}
For finite $p > 0$, $\argmax M_p(S) = \argmax f_p(S)$, so for simplicity we focus on the latter objective. In the next section we will use the fact that a $C$-approximate solution for $f_p(S)$ provides a $C^{1/p}$ approximate solution for $M_{p}(S)$. 

%In order to prove that $f_p(S)$ can be maximized in polynomial time for $p \geq 1$, we first consider the decision version of the problem. 
Given a fixed $\alpha > 0$, the decision version of our problem asks whether there exists some $S$ such that $f_p(S) \geq \alpha$, or equivalently
%\begin{equation*}
%  
$\sum_{i \in S} d_i(S)^p -\alpha |S| \geq 0$.
%\end{equation*}
We can obtain a yes or no answer to this question by solving the following optimization problem
\begin{equation*}
 \max_{S \subseteq V} \,\, \psi(S) = \sum_{i \in S} d_i(S)^p -\alpha |S|.
\end{equation*}
When $p = 1$, it is well known that this can be solved by solving a minimum $s$-$t$ cut problem~\cite{goldberg1984finding}, which is itself a special case of submodular minimization (equivalently, supermodular function maximization). The following result guarantees that we can still get a polynomial time algorithm for the $p$-mean densest subgraph when $p\geq 1$ by using general submodular minimization.
\begin{lemma}
	\label{lem:pgeq1}
	If $p \geq 1$ and $\alpha > 0$, the function $\psi(S) = \sum_{i \in S} d_i(S)^p -\alpha |S|$ is supermodular. %For $p \in (0,1)$, it is neither submodular nor supermodular.
\end{lemma}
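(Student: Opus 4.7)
The plan is to show supermodularity by checking the equivalent ``increasing marginal gains'' characterization: for any $A \subseteq B \subseteq V$ and $v \in V \setminus B$, the quantity $\psi(S \cup \{v\}) - \psi(S)$ is non-decreasing in $S$. Since $-\alpha|S|$ is modular, it contributes equally to both sides of this inequality and can be dropped; I only need to handle $F(S) = \sum_{i \in S} d_i(S)^p$.

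The central calculation is an explicit formula for the marginal gain. When we add $v \notin S$, two things happen: $v$ itself contributes a new term $d_v(S \cup \{v\})^p = |\mathcal{N}(v) \cap S|^p$, and for every neighbor $u \in \mathcal{N}(v) \cap S$ the induced degree jumps from $d_u(S)$ to $d_u(S)+1$, changing its contribution by $(d_u(S)+1)^p - d_u(S)^p$. Vertices in $S$ not adjacent to $v$ are unaffected. Thus
\begin{equation*}
F(S \cup \{v\}) - F(S) \;=\; |\mathcal{N}(v) \cap S|^p \;+\; \sum_{u \in \mathcal{N}(v) \cap S} \bigl[(d_u(S)+1)^p - d_u(S)^p\bigr].
\end{equation*}

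I then verify that each piece of this expression weakly increases when $S$ grows from $A$ to $B$. The first piece is immediate since $|\mathcal{N}(v) \cap A| \le |\mathcal{N}(v) \cap B|$ and $x \mapsto x^p$ is monotone on $[0, \infty)$. For the sum, I split the index set $\mathcal{N}(v) \cap B$ into the old part $\mathcal{N}(v) \cap A$ and the new part $\mathcal{N}(v) \cap (B \setminus A)$. On the old part I use $d_u(A) \le d_u(B)$ together with the fact that the discrete derivative $h(x) = (x+1)^p - x^p$ is non-decreasing in $x$ for $p \ge 1$ (this is exactly the convexity of $x \mapsto x^p$, and is also the content of the upper bound in Observation~\ref{obs:power}). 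On the new part, each summand is non-negative because $(x+1)^p \ge x^p$ for $x \ge 0$ and $p \ge 1$. Adding these contributions gives the desired monotonicity of the marginal gain, hence supermodularity of $F$, hence of $\psi$.

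I expect the only real step that requires care is bookkeeping in the marginal-gain formula: separating the new self-contribution from $v$, the updated contributions of $v$'s neighbors already in $S$, and the contributions from neighbors in $B \setminus A$ that only appear on the right-hand side. The inequality $(d_u(A)+1)^p - d_u(A)^p \le (d_u(B)+1)^p - d_u(B)^p$, which is where $p \ge 1$ enters in an essential way, is a one-line consequence of convexity, so the argument remains short. The hypothesis $\alpha > 0$ plays no role in the proof beyond ensuring that the threshold quantity is meaningful; supermodularity itself holds for any real $\alpha$.
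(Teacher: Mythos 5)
Your proof is correct, but it follows a genuinely different route from the paper's. You use the equivalent ``increasing marginal gains'' characterization of supermodularity: you compute the marginal gain $F(S\cup\{v\}) - F(S) = |\mathcal{N}(v)\cap S|^p + \sum_{u \in \mathcal{N}(v)\cap S}\bigl[(d_u(S)+1)^p - d_u(S)^p\bigr]$ explicitly and show each piece is monotone in $S$, with $p \ge 1$ entering through the non-decreasing discrete derivative of $x \mapsto x^p$. The paper instead verifies the lattice inequality $h(S) + h(T) \le h(S\cap T) + h(S\cup T)$ directly, node by node, by encoding each induced degree $d_v(R)$ as the cardinality of a set of edges incident to $v$ with both endpoints in $R$, and then invoking the supermodularity of $A \mapsto |A|^p$ on the edge-subset lattice (Observation~\ref{obs:g}) together with the containment $A \cup B \subseteq D$. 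Your version is more elementary and arguably more transparent about where convexity is used; it also has the pleasant side effect that the marginal-gain formula you derive is exactly the quantity $\Delta_j(S)$ in~\eqref{Delta} that drives \textsc{GenPeel}, so the two parts of the paper become visibly connected. The paper's version buys a cleaner reduction to a single abstract fact about the power of a cardinality, at the cost of some set-theoretic bookkeeping (and it quietly uses monotonicity of $x^p$ when passing from $|A\cup B|^p$ to $|D|^p$). Your closing remarks --- that the modular term $-\alpha|S|$ can be discarded and that $\alpha > 0$ is irrelevant to supermodularity --- are both accurate.
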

\begin{proof}
	The function $g(S) = |S|$ is modular,
	%	\begin{equation*}
	%	g(S) + g(T) = g(S\cap T) + g(S \cup T),
	%	\end{equation*}
	so it suffices to show that the function $h(S) = \sum_{i \in S} d_i(S)^p$ is supermodular whenever $p \geq 1$.
	For a node $v \in V$ and an arbitrary set $R \subset V$, if $v \notin R$ then define $d_v(R) = 0$. This allows us to write $h(R)$ as a summation over all nodes in $V$ without loss of generality:
	\begin{equation*}
	  h(R) = \sum_{v \in V} d_v(R)^p.
	\end{equation*}
	
	We would like to show that for arbitrary sets $S$ and $T$, we have
	\begin{equation*}
	  \sum_{v \in V} d_v(S)^p + d_v(T)^p \leq \sum_{v \in V} d_v(S\cap T)^p + d_v(S \cup T)^p,
	\end{equation*}
	which is true if we can show that for every $v \in V$ and $p\geq 1$,
	\begin{equation}
	\label{dp}
	d_v(S)^p + d_v(T)^p \leq d_v(S\cap T)^p + d_v(S \cup T)^p.
	\end{equation}
	To prove~\eqref{dp}, let $E_v$ be the set of edges adjacent to $v$, and note that all of the terms in inequality~\eqref{dp} represent the number of edges in a certain subset of $E_v$. Specifically, define
	\begin{align*}
	&A = \{ (i,j) \in E_v \colon i \in S, j \in S\},\quad
	B = \{ (i,j) \in E_v \colon i \in T, j \in T\}, \\
	&C = \{ (i,j) \in E_v \colon i \in S\cap T, j \in S\cap T\},\\
	&D = \{ (i,j) \in E_v \colon i \in S\cup T, j \in S\cup T\} .
	\end{align*}
	Consider the sets obtained by unions and intersections of $A$ and $B$:
	\begin{align*}
	A\cap B &= \{ (i,j) \in E_v \colon i \in S\cap T, j \in S\cap T\}  = C \\
	A \cup B &= \{ (i,j) \in E_v \colon i,j \in S \text { OR }i,j \in T \} \subseteq D .
	\end{align*}
	These sets are related to sets of edges that contribute to degree counts for node $v \in V$:
	\begin{align*}
	&d_v(S) = |A|,\quad
	d_v(T) = |B|,\quad
	d_v(S\cap T) = |C| = |A\cap B|, \\
	&d_v(S \cup T)= |D| \geq |A \cup B|.
	\end{align*}
	Recall from Observation~\ref{obs:g} that if $\Omega$ is a discrete set, 
	then the function $z(A) = |A|^p$ for $A \subseteq \Omega$ is supermodular whenever $p \geq 1$. 
	Therefore, for these sets of edges,
	\begin{align*}
	&d_v(S\cap T)^p + d_v(S \cup T)^p \\
	&= |C|^p + |D|^p
	\geq |A\cap B|^p + |A\cup B|^p
	\geq  |A|^p + |B|^p = d_v(S)^p + d_v(T)^p.
	\end{align*}
	So the desired inequality is shown.
\end{proof}
Using submodular minimization algorithms as a black box~\cite{Orlin2009}, we can perform binary search on $\alpha$ to find the maximum value of $\alpha$ such that $\phi(S) \geq 0$, which is equivalent to saying $f_p(S) \geq \alpha$. The number of binary search steps necessary to exactly optimize $f_p(S)$ can be easily bounded by a polynomial in $n$. The number of different values that the numerator $\sum_{v \in S} d_v(S)^p$ can take on is trivially bounded above by $2^n$, as this is the number of distinct ways to bipartition the graph. The denominator can take on $n$ values. We can use $0$ and $\sum_{v \in V} d_v^p$ as bounds for our binary search, and even for an extremely pessimistic case, $O(n)$ binary search steps would be necessary, though typically it will be far less in practice. We conclude the following result.
\begin{theorem}
	For any graph $G = (V,E)$ and $p \geq 1$, the $p$-mean densest subgraph can be found in polynomial time.
\end{theorem}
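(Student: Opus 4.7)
The plan is to prove the theorem by combining the supermodularity result of Lemma~\ref{lem:pgeq1} with a binary search over a threshold parameter, essentially following the standard Goldberg-style parametric reduction used for the classical densest subgraph problem. First I would observe that because $f_p(S) = [M_p(S)]^p$ and $p > 0$, it suffices to maximize $f_p$ instead of $M_p$. Then, for a fixed threshold $\alpha > 0$, I would cast the decision question ``does there exist $S$ with $f_p(S) \geq \alpha$?'' as the optimization problem of deciding whether $\max_{S \subseteq V} \psi(S) \geq 0$ for $\psi(S) = \sum_{v \in S} d_v(S)^p - \alpha |S|$. By Lemma~\ref{lem:pgeq1}, $\psi$ is supermodular, so $-\psi$ is submodular and can be minimized exactly in strongly polynomial time using an off-the-shelf algorithm such as Orlin's~\cite{Orlin2009}.

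Next I would wrap this decision oracle in a binary search on $\alpha$ to recover the actual optimum, taking the endpoints of the initial search interval to be $0$ and $\sum_{v \in V} d_v^p$, which are valid lower and upper bounds on $f_p(S)$ over all $S \subseteq V$. After each submodular minimization call the interval is halved: a nonnegative optimum value of $\psi$ means the true optimum of $f_p$ is at least $\alpha$, and a negative optimum means it is strictly less than $\alpha$. Once the interval shrinks below the minimum separation between two distinct achievable values of $f_p$, the correct optimum is pinned down uniquely, and a final submodular minimization call recovers an optimal $S^*$.

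The main obstacle is bounding the number of binary search iterations by a polynomial in $n$. The achievable values of $f_p(S)$ are a subset of $\{\,N/k : k \in \{1,\ldots,n\},\ N \in \mathcal{N}_k\,\}$, where $\mathcal{N}_k$ is the (finite) set of attainable numerators $\sum_{v \in S} d_v(S)^p$ over $|S|=k$. The key quantitative step is to argue that any two distinct such ratios are separated by at least $1/\mathrm{poly}(n)$, so $O(\log n)$ bits of precision suffice; for integer $p$ this is immediate because numerators are integers bounded by $n \cdot n^p$ and denominators by $n$, giving separation $\geq 1/n^2$, while for general real $p \geq 1$ one can appeal to the fact that the set of possible degree sequences is finite (there are at most $n^n$ of them), so the achievable values of $f_p$ form a finite set whose minimum gap is some fixed positive constant depending only on $G$ and $p$, and hence polynomially many binary search steps relative to $\log(1/\text{gap})$ suffice. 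Combining the polynomial complexity of each submodular minimization call with the polynomial number of binary search steps yields the claimed polynomial-time algorithm.
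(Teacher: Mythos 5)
Your proposal matches the paper's proof essentially step for step: reduce to maximizing $f_p$ rather than $M_p$, answer the decision question for a fixed threshold $\alpha$ by maximizing the supermodular function $\psi$ of Lemma~\ref{lem:pgeq1} via black-box submodular minimization, and wrap this oracle in a binary search over $[0, \sum_{v\in V} d_v^p]$. If anything, your treatment of why polynomially many binary-search iterations suffice is more careful than the paper's (which only counts the number of achievable objective values), since you explicitly address the minimum separation between distinct achievable values of $f_p(S)$ --- fully rigorous for integer $p$, and left at the same level of informality as the paper for general real $p$.
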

This result is intended to serve mainly as a theoretical result confirming the polynomial-time solvability of the problem for $p \geq 1$. 
%Using the combinatorial algorithm of Orlin~\cite{Orlin2009} for submodular minimization, and noting that it takes at most $O(n^2)$ time to evaluate $f_p(S)$, we can obtain an overall runtime of $O(n^8\log n)$.
We next turn to more practical greedy approximation algorithms. 
%Developing more practical techniques for exact maximization of $f_p(S)$ remains an interesting open direction for future research.

\subsection{Failure of the Standard Peeling Algorithm}
The standard peeling algorithm for both the maxcore and densest subgraph problem is to start with the entire graph $G$ and repeatedly remove the minimum degree node until no more nodes remain. We will refer to this algorithm generically as \textsc{SimplePeel}. This algorithm produces a set of $n$ subgraphs $S_1, S_2, \hdots, S_n$, one of which is guaranteed to solve the maxcore problem~\cite{matula1983smallest}, and another of which is guaranteed to provide at least a $1/2$-approximation to the standard densest subgraph problem~\cite{charikar2000greedy}. Trivially, this method also yields and optimal solution for $p = \infty$, since the entire graph $G$ solves the $\infty$-mean objective.
%If $k^*$ is the degeneracy $G$, the max-core is the subgraph maintained by this procedure right before a node of degree $k^*$ is first removed from the current subgraph. Furthermore, one of the $n$ subgraphs contained by this procedure is guaranteed to be a $1/2$-approximation to the densest subgraph problem~\cite{charikar2000greedy}. 

Given the success of this procedure for $p\in  \{-\infty, 1, \infty\}$, it is natural to wonder whether it can be used to obtain optimal or near optimal solutions for other values of $p$. Focusing on the $p \geq 0$ case, we know that if $d_v = \min_{i \in S} d_i(S)$ for a subset $S$, then it is also true that $d_v = \min_{i \in S} d_i(S)^p$, again suggesting that this strategy might be effective. 
However, surprisingly, we are able to show that the simple peeling algorithm can perform arbitrarily poorly for any $p > 1$. 
To show this, we consider the performance of the algorithm on the same class of graphs that has been used to show that the $1/2$-approximation for the standard peeling algorithm is tight for the 1-mean objective~\cite{khuller2009finding}.
\begin{lemma}
	\label{lem:failure}
	Let $p > 1$ and $\varepsilon \in (0,1)$ be fixed constants. There exists a graph $G$ such that applying \textsc{SimplePeel} on $G$ will yield an approximation worse than $\varepsilon$ for the $p$th power degree objective.
	%, i.e., $f_p(S^*)/f_p(\hat{S}) > 1/B$, where $S^* = \argmax_{S \subseteq V} f_p(S)$ and $\hat{S}$ is the best set returned \textsc{SimplePeel}.
\end{lemma}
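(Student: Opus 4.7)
The plan is to re-use the construction that establishes tightness of the $1/2$ bound for \textsc{SimplePeel} on the standard densest subgraph problem: let $G$ be the disjoint union of a star $K_{1,M}$ on vertices $\{c, l_1,\ldots,l_M\}$ and a perfect matching on $2N$ additional vertices. The parameters $M$ and $N$ will be fixed at the end as functions of $p$ and $\varepsilon$. Since the leaves of the star and every matching vertex all have degree $1$ in $G$, \textsc{SimplePeel} is free to strip off the star first (this is the same adversarial tie-breaking used in the analysis of the $p=1$ case): it removes all $M$ leaves, then the now-isolated center, and only afterward processes the matching.

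For the optimum value I would evaluate $f_p$ on the star $S^* = \{c, l_1,\ldots,l_M\}$, getting
\begin{equation*}
f_p(S^*) \;=\; \frac{M^p + M}{M+1} \;\geq\; \tfrac{1}{2}\,M^{p-1},
\end{equation*}
which already gives a lower bound of $\tfrac{1}{2} M^{p-1}$ on the optimum. For \textsc{SimplePeel}'s output I would upper-bound $f_p$ on every set along the forced trajectory. For $k \in \{0,1,\ldots,M\}$ the trajectory set $S_k$ is $K_{1,M-k}$ together with the full matching, so
\begin{equation*}
f_p(S_k) \;=\; \frac{(M-k)^p + (M-k) + 2N}{(M-k+1) + 2N}.
\end{equation*}
Choosing $N \geq M^p$ makes $(M-k)^p + (M-k) \leq 2M^p \leq 2N$, so the numerator is at most $4N$ while the denominator is at least $2N$, giving $f_p(S_k) \leq 2$. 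Every trajectory set after step $M$ is contained in the matching, where $f_p \leq 1$. Hence \textsc{SimplePeel} returns a value at most $2$, a constant independent of $M$.

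Combining the two estimates, the approximation ratio is at most $4/M^{p-1}$, which because $p > 1$ can be driven below any $\varepsilon \in (0,1)$ by taking $M = \lceil (4/\varepsilon)^{1/(p-1)} \rceil$ and then $N = M^p$. The one place the argument truly uses $p>1$ is the optimum lower bound of order $M^{p-1}$, coming from the single vertex of degree $M$. The main obstacle is therefore the uniform upper bound on $f_p(S_k)$ across all $k$: one must verify that a matching of polynomial size $N \geq M^p$ is large enough to dilute the potentially very large center contribution $(M-k)^p$ at every stage of peeling, and the bound $f_p(S_k) \leq 2$ is exactly what this dilution guarantees.
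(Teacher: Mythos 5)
Your construction is correct in its calculations but takes a genuinely different route from the paper, and it has one weakness worth fixing. The paper pads a complete bipartite graph $K_{D,d}$ (the high-value set, with $f_p \sim dD^{p-1}$) with $D$ disjoint cliques of size $d+2$; there the minimum-degree vertices are \emph{strictly} the degree-$d$ side of the bipartite graph (versus degree $d+1$ in the cliques), so \textsc{SimplePeel} is forced to destroy the optimum first with no appeal to tie-breaking, and the ratio is driven below $\varepsilon$ by fixing $d=\lceil 2/\varepsilon\rceil$ and letting $D\to\infty$. You instead fix the high-value set to be a star of size $M+1$ chosen from $p$ and $\varepsilon$, and grow the padding $N$; the resulting bounds ($\mathrm{OPT}\geq \tfrac12 M^{p-1}$, every trajectory value at most $2$, ratio at most $4/M^{p-1}$) are all verified correctly and give fully explicit parameters, which is arguably cleaner than the paper's asymptotic argument. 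The weakness is that your star leaves and matching vertices all have degree $1$, so the claim that \textsc{SimplePeel} ``will yield'' a bad approximation holds only under adversarial tie-breaking: a tie-breaking rule that peels the matching first leaves the intact star on the trajectory and recovers the optimum. Since the lemma asserts a failure of the algorithm itself, you should eliminate the tie rather than assume it is broken your way. This is a one-line repair: replace the perfect matching by $N$ disjoint triangles (or small cliques), so the leaves have degree $1$ while every padding vertex has degree $2$; the leaves are then strictly minimal, the star is destroyed deterministically, and the same dilution argument bounds every trajectory value by a constant depending only on $p$ (roughly $2^p+1$ once $N\geq M^p$), which still yields a ratio of order $M^{-(p-1)}\to 0$. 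With that modification your proof is a valid, self-contained alternative to the paper's.
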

\begin{proof}
	The proof is by construction. Let $G_1 = (V_1, E_1)$ be a complete bipartite graph with $D$ nodes on one side and $d$ nodes on the other, and $G_2 =  (V_2, E_2)$ be a disjoint union of $D$ cliques of size $d+2$. We treat $d$ as a fixed constant and $D$ as a value that will grow without bound.
	Define graph $G = (V_1 \cup V_2, E_1 \cup E_2)$ to be the union of $G_1$ and $G_2$. We have $f_p(V_2) = (d+1)^p$, and 
	\begin{equation*}
	f_p(V_1)  = \frac{Dd^p + dD^p}{d + D}.
	\end{equation*}
	As $D\rightarrow \infty$, $V_1$ is the best $p$-mean densest subgraph, with$f_p(V_1)$ behaving as $dD^{p-1}$. %, and every subset of nodes in $V_2$ as average $p$th power bounded above by $(d+1)^p$. 
	However, \textsc{SimplePeel} will start by removing all of the nodes in $V_1$, since on one side of the bipartite graph, all nodes have degree $d$,
	whereas all nodes $V_2$ have degree $d + 1$. 
	At the outset of the algorithm, we start with all of $V$ and note that
	\begin{equation*}
	f_p(V) = \frac{Dd^p + dD^p + D(d+2)(d+1)^p}{D+ d + D(d+2)} < 
	%\frac{Dd^p + dD^p + D(d+1)^p}{D+ d + D(d+2)} <
	% \frac{Dd^p + dD^p + D(d+1)^p}{D+ d + D(d+2)}{Dd} =
	d^{p-1} + D^{p-1} + (d+1)^{p}.
	\end{equation*}
	We see that as $D \rightarrow \infty$, this provides only a $1/d$ approximation to $f_p(V_1)$.	As \textsc{SimplePeel} removes nodes from $V_1$, the approximation gets worse, until we are left with a subgraph with maximum average $p$th power degree of $(d+1)^p$. Since $d$ was an arbitrary constant, we can choose $d = \ceil{2/\varepsilon}$. Then, for large enough $D$ we will have 
	\begin{align*}
	\frac{1}{d} < \frac{f_p(V)}{f_p(V_1)} < \frac{2}{d} < \varepsilon.
	\end{align*}
	Therefore, the approximation is worse than $\varepsilon$.
\end{proof}
The above result means that for any $\varepsilon \in (0,1)$ and $p > 1$ fixed, there exists a graph with an optimal $p$-mean densest subgraph $S^*$, such that the simple greedy method will return a set $\hat{S}$ satisfying
\begin{equation*}
f_p(\hat{S}) < \varepsilon f_p(S^*) \implies M_p(\hat{S}) < \varepsilon^{1/p} \max_{S \subseteq V} M_p(S).
\end{equation*}
Since $p$ is fixed and $\varepsilon$ is arbitrarily small, this means the simple greedy method can do arbitrarily badly when trying to approximate the $p$-mean densest subgraph problem.

\subsection{Generalized Peeling Algorithm when $p \geq 1$}
\begin{algorithm}[t]
	\caption{Generalized Peeling Algorithm (\textsc{GenPeel}-$p$)}
	\label{alg:genpeel}
	\begin{algorithmic}
		\State \textbf{Input}: $G = (V,E)$, parameter $p \geq 1$
		\State \textbf{Output}: Set $S^* \subseteq V$, satisfying $f_1(S^*) \geq \frac{1}{p+1} \max_S f_1(S)$.
		\State $S_0 \leftarrow V$
		\For{$i = 1$ to $n$} 
		\State $\ell = \argmin_j \Delta(d_j(S_{i-1}))$
		\State $S_i = S_{i-1}\backslash\{\ell\}$
		\EndFor
		\State Return $\max_i f_p(S_i)$.
	\end{algorithmic}
\end{algorithm}
The failure of \textsc{SimplePeel} can be explained by noting that when $p > 1$, removing a minimum degree node from a subgraph $S$ does not in fact \emph{greedily} improve $p$-density. Consider a node set $S$ and its average $p$th power degree function $f_p(S)$. Removing any $v \in S$ will change the denominator of $f_p(S)$ in exactly the same way. Therefore, to greedily improve the objective by removing a single node, we should choose the node that leads to the minimum decrease in the numerator $\sum_{v \in S} d_v(S)^p$. Importantly, it is not necessarily the case that $j = \argmin_{v \in S} d_v(S)^p$ is the best node to remove. This would account only for the fact that the numerator decreases by $d_j(S)^p$, but ignores the fact that removing $j$ will also affect the degree of all other nodes that neighbor $j$ in $S$. For example, if $j$ has a small degree but neighbors a high degree node $u$, then when $p > 1$, removing node $j$ and decreasing $u$'s degree could substantially impact the $p$-density objective.
%\arbnote{Intuition: small-degree nodes may still be attached to large degree ones where $p > 1$ is amplifying those large degrees? Worth saying?}
For a graph $G$, node set $S$, and arbitrary node $j \in S$, the following function reports the exact decrease in the numerator of $f_p(S)$ resulting from removing $j$:
\begin{equation}
\label{Delta}
\Delta_j(S) = d_j(S)^p + \sum_{i \in \mathcal{N}(j) \cap S} d_{i}(S)^p - [d_i(S) - 1]^p.
\end{equation}
%\arbnote{Should there be a negative sign out front? $[d_i(S) - 1]^p$ would appear after removing. I guess it is meant to be $f_p(S) - \Delta_j(S) = f_p(S \backslash \{j\})$? (Probably worth clarifying here)}
In other words, note that $\sum_{i \in S} d_i(S)^p$ is the numerator before removing $j$, and after removing it we have a new numerator 
\begin{equation*}
\sum_{i \in S\backslash\{j\}} d_i(S\backslash\{j\})^p = \sum_{i \in S} d_i(S)^p - \Delta_j(S).
\end{equation*}
Observe that when $p = 1$, $\Delta_j(S) = 2d_j(S)$, which explains why it suffices to remove the minimum degree node in order to greedily optimize the $p = 1$ variant of the objective. Based on these observations, we define a generalized peeling algorithm, which we call \textsc{GenPeel}-$p$, or simply \textsc{GenPeel} when $p$ is clear from context, based on iteratively removing nodes that minimize~\eqref{Delta}. Pseudocode is given in Algorithm~\ref{alg:genpeel}. We prove the following result.
\begin{theorem}
	Let $G = (V,E)$ be a graph, $p \geq 1$, and $T$ be the $p$-mean densest subgraph of $G$. Then \textsc{GenPeel}-$p$ returns a subgraph ${S}$ satisfying $(p+1) f_p({S})  \geq f_p(T)$, i.e., $(p+1)^{1/p}M_p({S}) \geq  M_p(T)$.
\end{theorem}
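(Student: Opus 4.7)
The plan is to mimic Charikar's $1/2$-approximation argument for the standard densest subgraph and extend it using the quantity $\Delta_j(\cdot)$ defined in~\eqref{Delta}, with Observation~\ref{obs:power} supplying the approximation constant. Throughout, let $T$ denote the optimum with $\alpha^* = f_p(T)$, and let $i^*$ be the largest index such that $T \subseteq S_{i^*}$; in particular the node $\ell$ removed by \textsc{GenPeel} at step $i^*+1$ belongs to $T$.

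\emph{Step 1: local optimality at $T$.} First I would show that every $v \in T$ satisfies $\Delta_v(T) \geq \alpha^*$. By definition, $f_p(T\setminus\{v\}) = (\sum_{u\in T} d_u(T)^p - \Delta_v(T))/(|T|-1)$, and optimality of $T$ gives $f_p(T) \geq f_p(T\setminus\{v\})$; cross-multiplying yields $\Delta_v(T) \geq \alpha^*$.

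\emph{Step 2: transferring to $S_{i^*}$.} Next I would argue that $\Delta_\ell(S_{i^*}) \geq \Delta_\ell(T)$. Since $T \subseteq S_{i^*}$, we have $d_u(T) \leq d_u(S_{i^*})$ for every $u$, and $\mathcal{N}(\ell)\cap T \subseteq \mathcal{N}(\ell)\cap S_{i^*}$. The function $x \mapsto x^p - (x-1)^p$ is non-decreasing on $x\geq 1$ for $p\geq 1$ (its derivative $p(x^{p-1}-(x-1)^{p-1})$ is non-negative), so every surviving term in~\eqref{Delta} can only grow, and the extra terms are non-negative. Combined with $\ell = \argmin_j \Delta_j(S_{i^*})$, this gives $\Delta_v(S_{i^*}) \geq \Delta_\ell(S_{i^*}) \geq \Delta_\ell(T) \geq \alpha^*$ for every $v \in S_{i^*}$. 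Summing over $v \in S_{i^*}$ yields
\begin{equation*}
\sum_{v \in S_{i^*}} \Delta_v(S_{i^*}) \geq |S_{i^*}|\,\alpha^*.
\end{equation*}

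\emph{Step 3: upper bound via double counting and Observation~\ref{obs:power}.} Writing $S = S_{i^*}$ and expanding $\Delta_v(S)$, the ``neighbor'' contributions can be reorganized by swapping the order of summation: each $i \in S$ appears as a neighbor of exactly $d_i(S)$ nodes $v \in S$, so
\begin{equation*}
\sum_{v \in S} \Delta_v(S) = \sum_{v \in S} d_v(S)^p + \sum_{i \in S} d_i(S)\bigl[d_i(S)^p - (d_i(S)-1)^p\bigr].
\end{equation*}
Applying the upper bound $x^p - (x-1)^p \leq p\,x^{p-1}$ from Observation~\ref{obs:power} to each bracket gives $\sum_v \Delta_v(S) \leq (p+1) \sum_v d_v(S)^p$. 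Combining with the lower bound from Step~2 produces $(p+1) f_p(S_{i^*}) \geq \alpha^* = f_p(T)$, and since $S_{i^*}$ is among the candidate sets considered by \textsc{GenPeel}, the returned subgraph $S$ also satisfies $(p+1) f_p(S) \geq f_p(T)$. Taking $p$-th roots converts this to the bound on $M_p$.

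The main subtlety I expect is Step~2: one needs the monotonicity of $x^p - (x-1)^p$ (not just $x^p$) to propagate the local-optimality inequality from $T$ up to the larger set $S_{i^*}$, and it is crucial that the terms appearing in $\Delta_\ell(T)$ form a subset of those appearing in $\Delta_\ell(S_{i^*})$ so that nothing is lost. The rest is essentially algebraic manipulation plus a careful double-counting identity, with Observation~\ref{obs:power} doing the work of producing exactly the $(p+1)$ factor.
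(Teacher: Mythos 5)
Your proposal is correct and follows essentially the same argument as the paper's proof: local optimality of $T$ gives $\Delta_v(T) \geq f_p(T)$, monotonicity of $x^p - (x-1)^p$ for $p \geq 1$ transfers this to the first superset $S_{i^*} \supseteq T$ from which a node of $T$ is peeled, and the double-counting identity together with Observation~\ref{obs:power} yields the $(p+1)$ factor. The only cosmetic difference is that you sum $\Delta_v(S_{i^*}) \geq \alpha^*$ over all $v$, whereas the paper bounds the minimum by the average, which is the same inequality.
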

\begin{proof}
 Define $\gamma = f_p(T)$, which implies that $\sum_{i \in S} d_i(S)^p - |T| \gamma  = 0$.
%	\begin{equation}
%	\label{zero}
%	   \sum_{i \in S} d_i(S)^p - |T| \gamma  = 0.
%	\end{equation}
%	We first show that for every $j \in T$, we have $\gamma \leq \Delta_j(T)$. To see why, observe that by the
	Since $T$ is optimal, removing a node $j$ will produce a set with $p$-density at most $\gamma$, and therefore, we have
	\begin{align*}
	&\frac{\sum_{i \in T} d_i(T)^p - \Delta_j(T)}{|T| - 1} \leq \gamma \\
	\implies & \sum_{i \in T} d_i(T)^p  -\Delta_j(T) \leq |T| \gamma - \gamma \implies \gamma \leq \Delta_j(T).
	\end{align*}
	
	Observe that for any set $R \supseteq T$ and $j \in T$ we have $\Delta_j(T) \leq \Delta_j(R)$. This holds because $d_j(R)$ is larger than $d_j(T)$ when we grow the subgraph, and also because the function $d_{i}(R)^p - [d_i(R) - 1]^p$ also monotonically grows as $R$ gets bigger, \emph{as long as $p \geq 1$}. 
	
	Let $S$ denote the set maintained by the greedy algorithm right before the first node $j \in T$ is removed by peeling. Since $j$ is the first node to be removed by the peeling algorithm, we know that $S \supseteq T$ and $\Delta_j(T) \leq \Delta_j(S)$. Because $j = \argmin_{i \in S}\Delta_i(S)$, we know $\Delta_j(S)$ is smaller than the average value of $\Delta_i(S)$ across nodes in $S$, so
	\begin{align*}
	\gamma \leq &\Delta_j(T) \leq \Delta_j(S) \leq \frac{1}{|S|} \sum_{\ell \in S} \Delta_\ell(S) \\
	&= \frac{1}{|S|} \left( \sum_{\ell \in S} d_\ell(S)^p + \sum_{\ell \in S} \sum_{i \in \mathcal{N}(\ell) \cap S} d_i(S)^p - [d_i(S)-1]^p\right)\\
	&\leq \frac{1}{|S|} \left( \sum_{\ell \in S} d_\ell(S)^p + \sum_{\ell \in S} \sum_{i \in \mathcal{N}(\ell) \cap S} pd_i(S)^{p-1} \right).
	\end{align*}
	The last step follows from the bound in Observation~\ref{obs:power}. For every $i \in S$, the value $pd_i(S)^{p-1}$ show up exactly $d_i(S)$ times in the double summation in  the second term---once for every $\ell \in S$ such that $i$ neighbors $\ell$ in $S$. Therefore:
	\begin{equation*}
	{ \gamma \leq \frac{ \sum_{\ell \in S} d_\ell(S)^p}{|S|} + \frac{p \sum_{i \in S} d_i(S)^{p}}{|S|} = (p+1)f_p(S).}
	\end{equation*} 
\end{proof}
In the appendix, we provide an in-depth graph construction to show that this approximation guarantee is asymptotically tight for all values of $p$. Nevertheless, for $p \geq 1$, the quantity $(p+1)^{1/p}$ decreases monotonically and has a limit of one, meaning that the $p$-density problem in fact becomes easier to approximate as we increase $p$. In fact, the $1/2$-approximation for the standard densest subgraph problem is the worst approximation that this method obtains for any $p \geq 1$. The fact that the approximation factor converges to 1 also intuitively matches the fact that when $p = \infty$, the optimal solution is trivial to obtain by keeping all of $G$. 

\emph{Key differences between peeling algorithms.}
Before moving on, it is worth noting two key differences about removing nodes based on degree (\textsc{SimplePeel}) and removing nodes based on $\Delta_j$ (\textsc{GenPeel}). The first is purely practical: it is faster to keep track of node degrees than to keep track of changes to $\Delta_j$ for each node $j$ when peeling. Removing a node $v$ will change the degrees of nodes within a one-hop neighborhood, but will change $\Delta_j$ values for every node within a two-hop neighborhood of $v$. This additional detail complicates the runtime analysis and implementation of \textsc{GenPeel}. For $m = |E|$, a naive runtime bound of $O(nm)$ can still be obtained by first realizing that as long as degrees are known, the value of $\Delta_j$ can be computed in $O(d_j)$ time, and so all $\Delta_j$ values can be computed in $O(\sum_j d_j) = O(m)$ time. There are $n$ rounds overall, and in each round it takes $O(n)$ time to find the minimizer of $\Delta_j$, plus $O(n)$ time to update degrees after a node is removed, plus $O(m)$ time to update $\Delta_j$ values in preparation for the next round. 

Although this runtime bound is much worse than the linear time guarantee for the best implementation of \textsc{SimplePeel}, it is quite pessimistic, and in practice we can still scale up \textsc{GenPeel} to large datasets in practice. Furthermore, although \textsc{GenPeel} is slower, this difference in strategy is not without its advantages. The $1/2$-approximation for \textsc{SimplePeel} is tight on certain graph classes made up of disjoint cliques and complete bipartite graphs~\cite{khuller2009finding}. This is also the same class of graphs we used to show that the method does arbitrarily badly when $p > 1$. \textsc{SimplePeel} fails to find the best subgraph on these instances because it considers the degree of a node without sufficiently considering the effect on its neighborhood. Meanwhile, \textsc{GenPeel} finds the optimal solution on these graph classes for all $p \geq 1$. In our next section, we will demonstrate empirically that running \textsc{GenPeel}  with $p > 1$ actually outperforms \textsc{SimplePeel} in optimizing the $p = 1$ objective.

%\arbnote{I'm not sure if there is a good way to do this with the 2-hop neighborhood, but I can imagine a reviewer complaining that we don't give a complexity analysis of the runtime. Maybe forward reference to the experimental results, saying that it isn't too bad for graphs of reasonable size?}

%
%!TEX root = ../vbk-main.tex

\section{Experiments}
\label{sec:experiments}
We now consider how our methods enable us to find a range of different meaningful notions of dense subgraphs in practice. 
We begin by showing that \textsc{GenPeel} does an excellent job of approximating the optimal $p$-mean densest subgraph problem, scales to large datasets, and detects subgraphs with a range of different meaningful notions of density. Using \textsc{GenPeel}-$p_0$ with $p_0 > 1$ can even be used to solve the $p = 1$ objective (standard densest subgraph) better than \textsc{SimplePeel} (which is equivalent to \textsc{GenPeel}-$1$) in many cases, even though \textsc{SimplePeel} greedily optimizes the $p = 1$ objective but \textsc{GenPeel}-$p_0$ with $p_0 > 1$ does not. All of our experiments were performed on a laptop with 8GB of RAM and 2.2 GHz Intel Core i7 processor. 
We use public datasets from the SNAP repository~\cite{snapnets} and the SuiteSparse Matrix collection~\cite{davis2011uf}. 
We implement~\textsc{GenPeel} and~\textsc{SimplePeel} in Julia, both using a simple min-heap data structure for removing nodes, which works well in practice. 
We implement our optimal submodular minimization approach in MATLAB, in order to use existing submodular optimization software~\cite{krause2010sfo}. All algorithm implementations and code for our experiments are available at~\url{https://github.com/nveldt/GenMeanDSG}.

\subsection{\textsc{GenPeel} Approximation Performance}
\begin{table*}[t!]
	\caption{
		We compare our peeling algorithm for the $p$-mean densest subgraph against \textsc{SimplePeel} (the $p = 1$ special case) and the maxcore of a graph (p = $-\infty$) in terms of various measures of density on a range of graphs. With some exceptions, increasing $p$ tends to produce larger sets with lower edge density. The $p = 1$ and maxcore case have the same runtime as they rely on finding the same ordering of nodes, with different stopping points. Simple peeling is faster, but our approach is still fast, and for max degree, averaged squared degree, and average degree, one of our new approaches leads to the best results in all cases. Our new methods in several cases also lead to the best results for edge density. We highlight in bold the best result obtained for these four different notions of density. The fact that \textsc{GenPeel}-$p_0$ with $p_0 > 1$ outperforms \textsc{SimplePeel} in terms of average degree is especially significant, since the latter is designed to greedily optimize average degree, the $p = 1$ objective.
	}
	\label{tab:snap}
	\centering
	\scalebox{0.95}{\begin{tabular}{l   |  l  | l l l     l l l  l l l l }
			\toprule
			&& \textbf{Astro} & \textbf{CM05} & \textbf{BrKite} & \textbf{Enron} & \textbf{roadCA} & \textbf{roadTX} & \textbf{webG} & \textbf{webBS} & \textbf{Amaz} & \textbf{YTube} \\
			&$|V|$& 17,903 &36,458 & 58,228 & 36,692 & 1,971,281 & 1,393,383 & 916,428 & 685,230 & 334,863 & 1,134,890  \\
			\textbf{Metric}& $|E|$ & 196,972 &171,734 &21,4078 &183,831 &2,766,607 &1,921,660 &4,322,051 &6,649,470 &925,872 &2,987,624 \\
			\midrule
			Size & \emph{maxcore}& 57& 30& 154& 275& 4568& 1579& 48& 392& 497& 845 \\
			& $p = 0.5$& 165& 30& 200& 469& 4568& 1579& 229& 392& 497& 1616 \\
			$|S|$& $p = 1.0$& 1151& 563& 219& 548& 11& 3721& 240& 392& 34& 1863 \\
			& $p = 1.05$& 1317& 565& 220& 556& 6& 26& 240& 392& 3848& 1900 \\
			& $p = 1.5$& 1564& 742& 242& 713& 12& 91& 243& 5352& 118& 3085 \\
			& $p = 2.0$& 1491& 938& 273& 1036& 185& 13& 4429& 34944& 550& 29639 \\
			\midrule
			Edge  & \emph{maxcore}& \textbf{1.0}& \textbf{1.0}& \textbf{0.502}& \textbf{0.256}& 0.001& 0.002& \textbf{0.994}& \textbf{0.529}& 0.014& \textbf{0.102} \\
			Density& $p = 0.5$& 0.348& \textbf{1.0}& 0.407& 0.159& 0.001& 0.002& 0.235& \textbf{0.529}& 0.014& 0.056 \\
			& $p = 1.0$& 0.052& 0.056& 0.372& 0.137& 0.345& 0.001& 0.227& \textbf{0.529}&\textbf{0.23}& 0.049 \\
			$|E_S|/{|S| \choose 2}$& $p = 1.05$& 0.045& 0.056& 0.37& 0.135& \textbf{0.733}& 0.166& 0.227& \textbf{0.529}& 0.002& 0.048 \\
			& $p = 1.5$& 0.039& 0.043& 0.335& 0.104& 0.333& 0.044& 0.225& 0.031& 0.082& 0.029 \\
			& $p = 2.0$& 0.041& 0.032& 0.29& 0.068& 0.02& \textbf{0.295}& 0.005& 0.001& 0.005& 0.001 \\
			\midrule
			Avg  & \emph{maxcore}& 56.0& 29.0& 76.87& 70.06& 3.32& 3.34& 46.71& \textbf{206.81}& 6.77& 86.07 \\
			Degree& $p = 0.5$& 57.02& 29.0& 80.91& 74.38& 3.32& 3.34& 53.58& \textbf{206.81}& 6.77& 90.76 \\
			& $p = 1.0$& 59.28& 31.57& 81.11& 74.68& 3.45& 3.49& 54.36& \textbf{206.81}& 7.59& 91.16 \\
			$\textbf{avg } d_v(S)$& $p = 1.05$& 59.25& 31.58& \textbf{81.12}& \textbf{74.69}& \textbf{3.67}& \textbf{4.15}& 54.36& \textbf{206.81}& 8.7& \textbf{91.18} \\
			& $p = 1.5$& 60.74& \textbf{31.61}& 80.8& 73.96& \textbf{3.67}& 4.0& \textbf{54.47}& 166.93& \textbf{9.56}& 88.86 \\
			& $p = 2.0$& \textbf{60.92}& 30.32& 78.99& 70.35& 3.62& 3.54& 20.39& 44.04& 2.65& 19.88 \\
			\midrule
			Avg & \emph{maxcore}& 3136.0& 841.0& 6335.5& 5685.5& 11.3& 11.7& 2182.4& 43840.3& 47.4& 9227.8 \\
			Squared & $p = 0.5$& 3297.6& 841.0& 7372.9& 7002.2& 11.3& 11.7& 2930.9& 43840.3& 47.4& 11486.8 \\
			Degree & $p = 1.0$& 4154.3& 1265.8& 7614.1& 7301.6& 12.2& 12.7& 3031.9& 43840.3& 59.2& 12146.5 \\
			& $p = 1.05$& 4226.3& 1269.0& 7624.9& 7336.1& 13.7& \textbf{19.2}& 3031.9& 43840.3& 189.3& 12220.1 \\
			$\textbf{avg } d_v(S)^2$& $p = 1.5$& 4691.7& 1356.1& 7776.6& 7691.7& 13.7& 17.3& 3051.2& 157225.3& 372.9& 14359.8 \\
			& $p = 2.0$& \textbf{5106.6}& \textbf{1384.1}& \textbf{7882.1}& \textbf{7918.9}& \textbf{13.9}& 18.6& \textbf{9730.9}& \textbf{455975.9}& \textbf{552.3}& \textbf{33262.3} \\
						\midrule
			Max & \emph{maxcore}& 56& 29& 153& 216& 7& \textbf{12}& 47& 391& 13& 447 \\
			Degree & $p = 0.5$& 108& 29& 196& 302& 7& \textbf{12}& 82& 391& 13& 844 \\
			& $p = 1.0$& 241& 161& 214& 333& 4& \textbf{12}& 84& 391& 10& 954 \\
			$\textbf{max } d_v(S)$& $p = 1.05$& 281& 163& 215& 338& 4& 8& 84& 391& 161& 978 \\
			& $p = 1.5$& 333& 200& 233& 399& 4& 8& 86& 5351& 106& 1428 \\
			& $p = 2.0$& \textbf{392}& \textbf{257}& \textbf{260}& \textbf{513}& \textbf{9}& \textbf{12}& \textbf{2329}& \textbf{34943}& \textbf{549}& \textbf{28754} \\
						\midrule
			Runtime & \emph{maxcore}& 0.03& 0.03& 0.06& 0.04& 1.37& 0.96& 2.4& 11.27& 0.45& 2.51 \\
			& $p = 0.5$& 0.29& 0.19& 0.38& 0.46& 4.8& 3.41& 51.28& 608.5& 2.01& 58.87 \\
			& $p = 1.0$& 0.03& 0.03& 0.06& 0.04& 1.37& 0.96& 2.4& 11.27& 0.45& 2.51 \\
			& $p = 1.05$& 0.25& 0.18& 0.35& 0.4& 4.4& 3.0& 41.92& 490.91& 1.62& 39.21 \\
			& $p = 1.5$& 0.26& 0.17& 0.32& 0.37& 4.29& 3.14& 39.64& 324.67& 1.67& 30.26 \\
			& $p = 2.0$& 0.25& 0.17& 0.32& 0.35& 4.12& 3.07& 19.13& 295.73& 1.69& 26.17 \\
			\bottomrule
	\end{tabular}}
\end{table*} 
\begin{figure}[t]
	\centering
	\subfloat[Polbooks, $n = 105$\label{fig:polbooks}] 
	{\includegraphics[width=.45\linewidth]{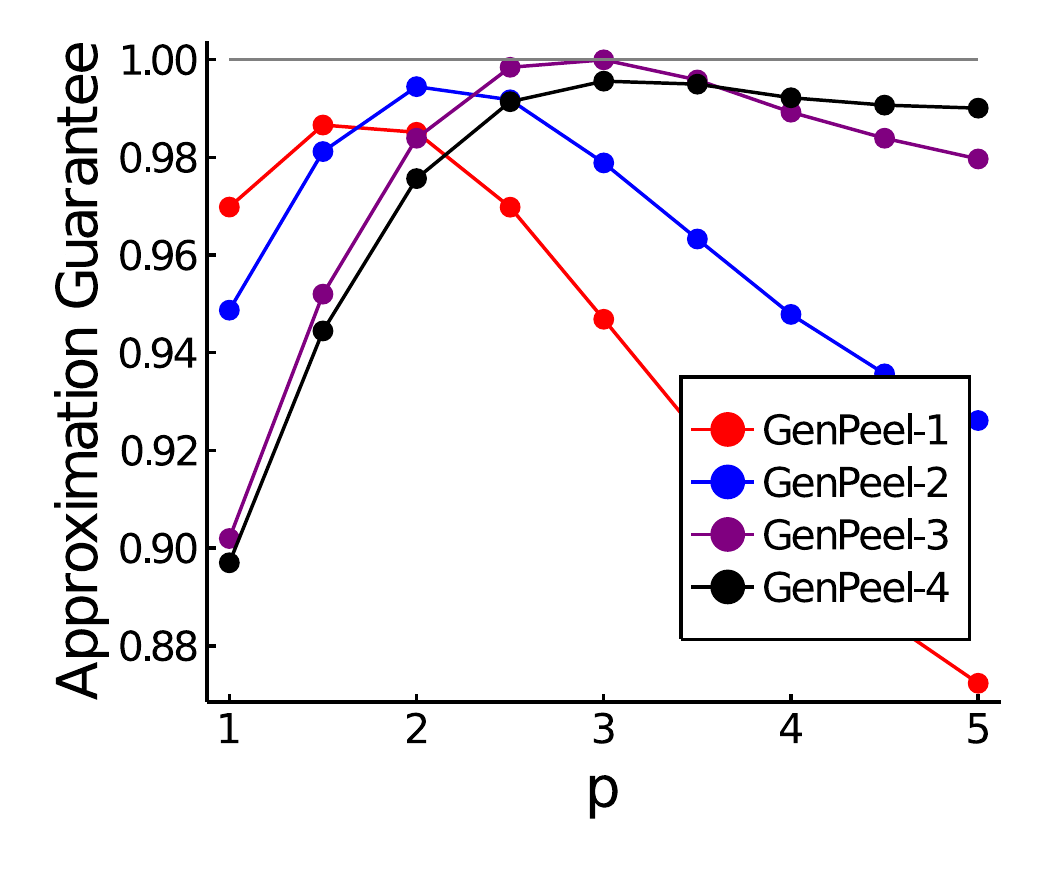}}\hfill
	\subfloat[Adjnoun, $n = 112$\label{fig:adjnoun}] 
	{\includegraphics[width=.45\linewidth]{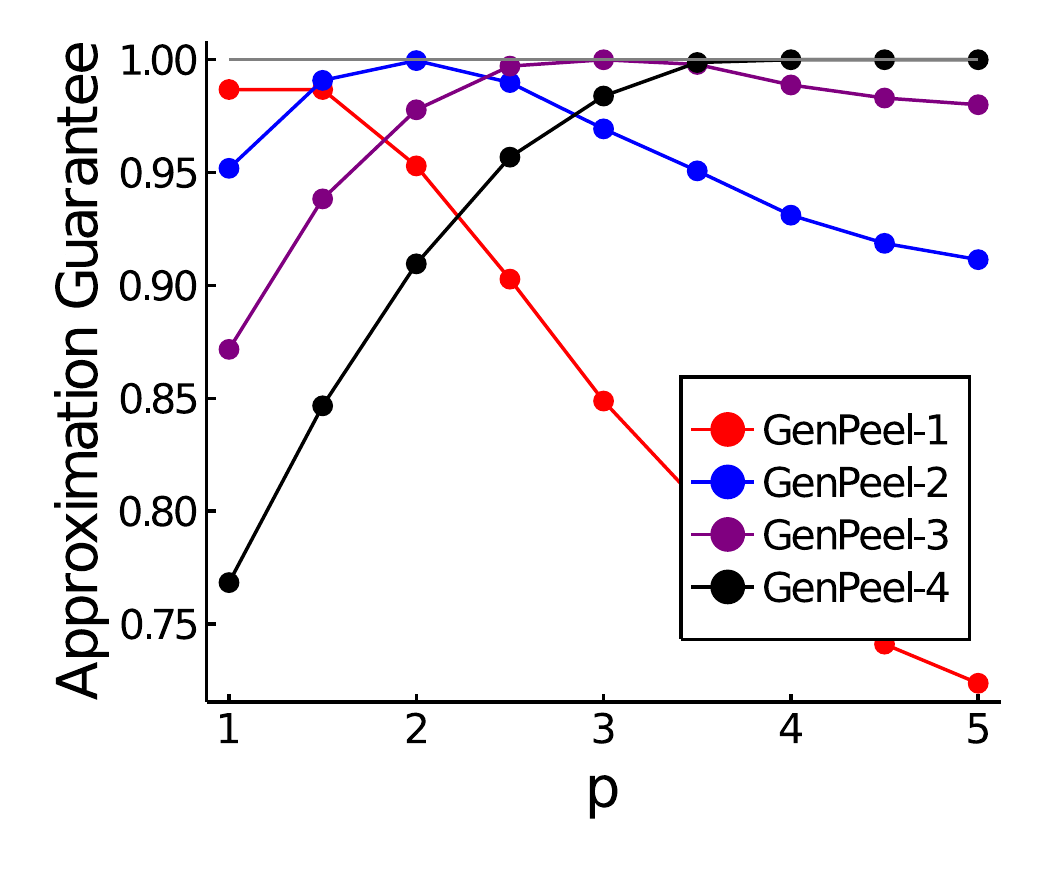}}\hfill
	\vspace{-.5\baselineskip}
	\subfloat[Jazz, $n = 198$ \label{fig:jazz}]
	{\includegraphics[width=.45\linewidth]{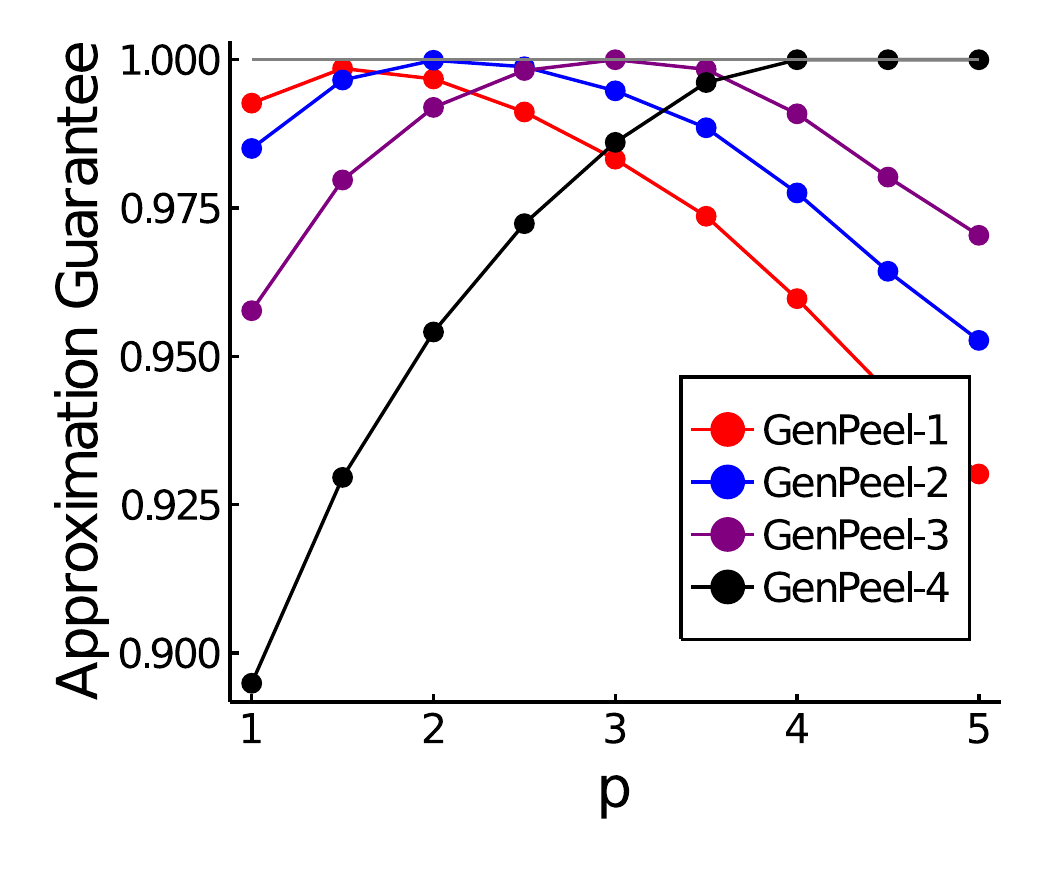}}\hfill
	\subfloat[Email, $n = 1005$\label{fig:email}] 
	{\includegraphics[width=.45\linewidth]{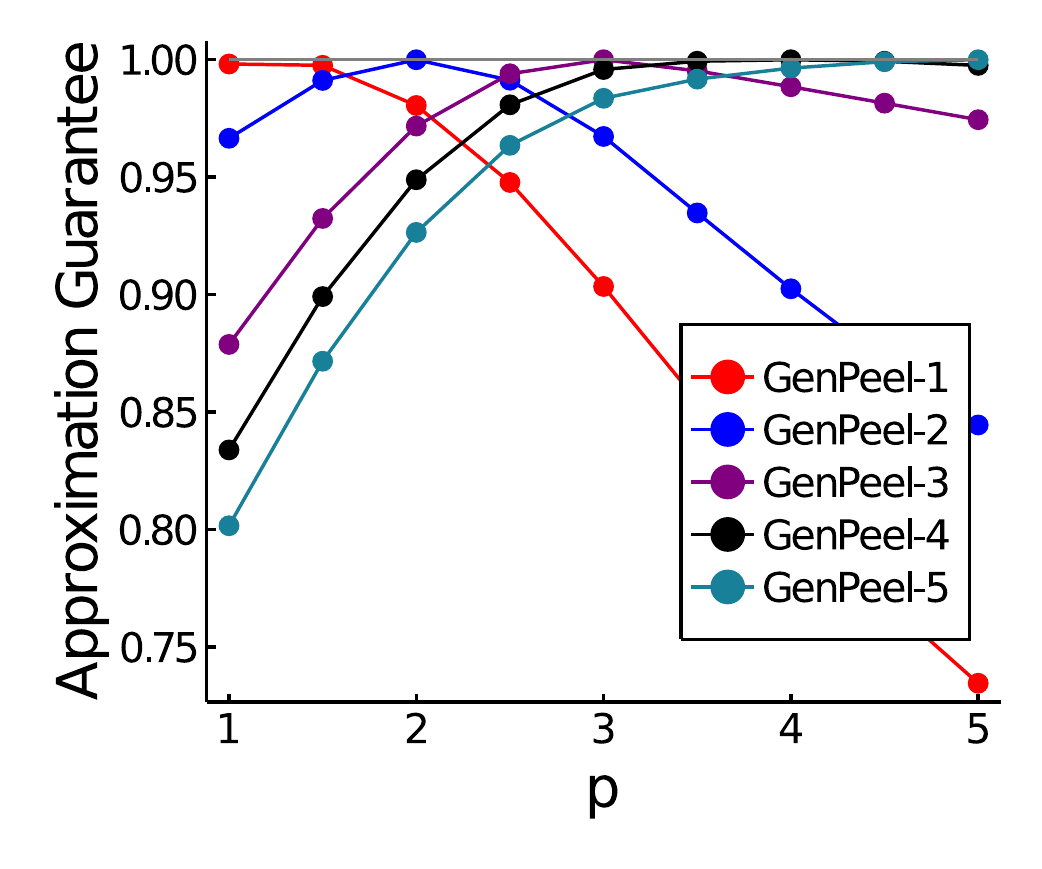}}\hfill
	%\vspace{-.5\baselineskip}
	\caption{Quality of the fast \textsc{GenPeel} heuristic compared to exact solution obtained with submodular minimization.
	Different runs of \textsc{GenPeel} for different values of $p$ do a good job of approximating the objective. 
	For plots (a)--(c), the $p = 4$ and $p = 5$ greedy solution are the same.
%	\arbnote{For consistency, other parts of the text use $p$-\textsc{GenPeel} but the labels use \textsc{GenPeel}-$p$.}
	}
	\label{fig:approx}
	%\vspace{-\baselineskip} $1/(p+1)^{1/p}$
\end{figure}
In practice,~\textsc{GenPeel} can find a dense subgraph that approximates the optimal solution much better than its worst case guarantee. Furthermore, both the optimal $p$-mean densest subgraphs for different $p$, and the sets found by~\textsc{GenPeel} using different $p$, are meaningfully distinct from each other and highlight different notions of density in real-world graphs. To demonstrate this, we find a set of optimal solutions to our objective for $p$ values in $P_\mathit{objs} = \{1.0, 1.5, 2.0, \ldots, 5.0\}$. 
We solve our objective exactly on graphs with up to 1000 nodes to within a small tolerance with a MATLAB implementation that uses existing submodular minimization software as a black box~\cite{krause2010sfo}. 
We then run \textsc{GenPeel} for each $p \in P_\mathit{alg} =\{1.0,2.0,3.0,4.0,5.0\}$. This produces 5 dense subgraphs, and we evaluate how each one approximates the optimal solution for all $p \in P_\mathit{objs}$. 
%We plot results in Figure~\ref{fig:approx}. 

 %We will specifically use the term \textsc{GenPeel}-$p$ when referring to running our method with a parameter $p$.
  As expected, \textsc{GenPeel}-$p_0$ provides the best approximation for all $p$ near $p_0$. Rounded curves in Figure~\ref{fig:approx} show that each run of \textsc{GenPeel} optimizes a different regime of our problem. For three datasets, the $p \in \{4,5\}$ solutions are identical, both at optimality and for the sets returned by \textsc{GenPeel}. Otherwise, we see a clear distinction between the output curves for each run of the algorithm, indicating that we are finding different types of dense subgraphs.
  
\subsection{Peeling Algorithms for Dense Subgraphs}
Out next set of experiments places the standard greedy peeling algorithm for densest subgraph ($p = 1$) and the peeling algorithm for finding maxcore ($p = -\infty$) within a broader context of parametric peeling algorithms for dense subgraph discovery. 
%Recall that for these problems, we can simply repeatedly remove nodes of minimum degree. Using different stopping points produces a $1/2$-approximation for densest subgraph~\cite{charikar2000greedy} and an optimal solution for maxcore~\cite{matula1983smallest}. 
By comparing these  outputs against \textsc{GenPeel} for different $p$ values near 1, we can observe how each method emphasizes and favors different notions of density in the graph. We can also see how running \textsc{GenPeel} for values near but not equal to one provides an accuracy vs.\ runtime tradeoff when it comes to finding sets that satisfy the traditional $p = 1$ notion of density.

We run \textsc{GenPeel} for $p \in \{0.5, 1.0, 1.05, 1.5, 2.0\}$. Although our method provides no formal guarantees when $p < 1$, it nevertheless greedily optimizes the $p$-mean density objective and produces meaningfully different subgraphs. When $p =1$, our implementation defaults to running the faster \textsc{SimplePeel} algorithm, and also outputs the maxcore solution, as this is obtained by simple peeling with a different stopping point. We focus on the regime $p \in [0.5,2]$ in order to better understand how different desirable measures of density vary as we explore above and below the standard densest subgraph regime ($p = 1$). Additionally, restricting to $p \leq 2$ means we are interpolating between objectives that have previously been studied~\cite{matula1983smallest,charikar2000greedy,farago2008general}, and avoids placing too high of an emphasis on just finding a small number of very high degree nodes.

Table~\ref{tab:snap} displays runtimes and subgraph statistics for each peeling result on a range of familiar benchmark graphs from a variety of different domains. This includes two citation networks (ca-Astro, condmat2005), two road networks (road-CA, road-TX), two web graphs (web-Google, web-BerkStan), an email network (Enron), two social networks (BrightKite, YouTube), and a retail graph (Amazon).
We report the edge density (number of edges divided by number of pairs of nodes), the size of the set returned, and the average degree (i.e., the $p = 1$ objective). There are clear trends in the output of each method: as $p$ decreases, the subgraphs tend to be smaller and have a higher edge density. 
As $p$ increases from 0.5 to 2, the average squared degree and the maximum degree increase significantly. What is perhaps most significant is that running \textsc{GenPeel}-$p$ with $p > 1$ tends to produce better sets than \textsc{SimplePeel} in terms of the standard densest subgraph objective. We conjecture that this is because \textsc{GenPeel} makes more strategic node removal decisions based not only on a node's degree, but also on the degree of its neighbor. This comes at the expense of a slower runtime, but we still find that our method is fast and scales up to very large graphs. 

\subsection{Dense Subgraphs in Social Networks}
\begin{figure}[t]
	\centering
	\includegraphics[width=.6\linewidth]{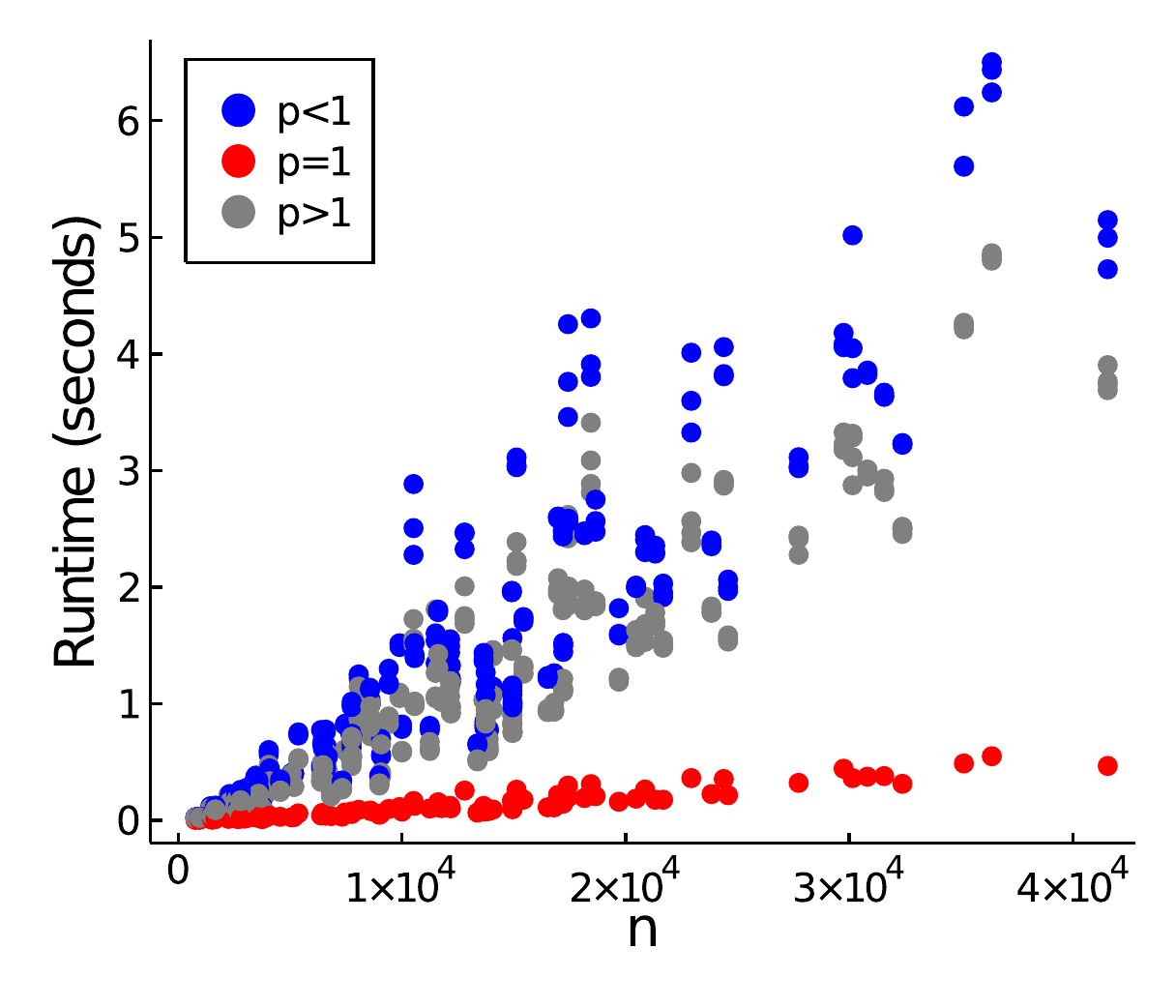}
	\caption{Runtime for Facebook100. \textsc{GenPeel} scales linearly on these datasets,
		and takes only a few seconds to run.}
	\label{fig:fbrun}
	%\vspace{-\baselineskip}
\end{figure}
Finally, we use \textsc{GenPeel} to detect different types of dense subgraphs in social networks. We run our method for $p \in \{0.25, 0.5, \ldots, 2.0\}$, 
on all graphs from the Facebook100 dataset~\cite{traud2012facebook}. Each graph is a snapshot of the Facebook network at a US university. The graphs have 700--42k nodes. 
For each of the 100 graphs, we plot curves indicating how the average degree, size, maximum degree, and edge density changes as $p$ increases (Figure~\ref{fig:fb}). In this parameter regime, the average degree of subgraphs returned hardly varies (Figure~\ref{fig:mean}). Increasing $p$ tends to produce subgraphs that are larger and have a higher maximum degree (Figures~\ref{fig:size} and~\ref{fig:max}), while decreasing $p$ produces smaller sets with higher edge density (Figure~\ref{fig:density}). We again find that running \textsc{GenPeel} with $p > 1$ can often find sets with higher average degree. This happens on six Facebook graphs when $p = 1.25$ (Figure~\ref{fig:mean}). We separately ran \textsc{GenPeel} with $p = 1.05$ on all graphs, and found that it returned sets with higher average degree than \textsc{SimplePeel} on 42 out of the 100 graphs.

Figure~\ref{fig:fbrun} is a scatter plot of points ($n$,$s$) where $n$ is the number of nodes in a Facebook graph and $s$ the the time in seconds for our algorithm. When $p = 1$, the method is much faster as it only considers node degree when greedily removing nodes. For the $p \neq 1$ case, runtime are still very fast and still scale roughly linearly, though with a steeper slope. For $p < 1$ and $p > 1$, the same exact procedure is applied. The $p < 1$ case nevertheless tends to take slightly longer, perhaps simply because greedy node removal in this regime can lead to more drastic changes in the function $\Delta_j$ that the algorithm must update at each iteration.
Still, on even the largest graphs, \textsc{GenPeel} takes just a few seconds.

\begin{figure}[t]
	\centering
	\subfloat[Average degree \label{fig:mean}] 
	{\includegraphics[width=.45\linewidth]{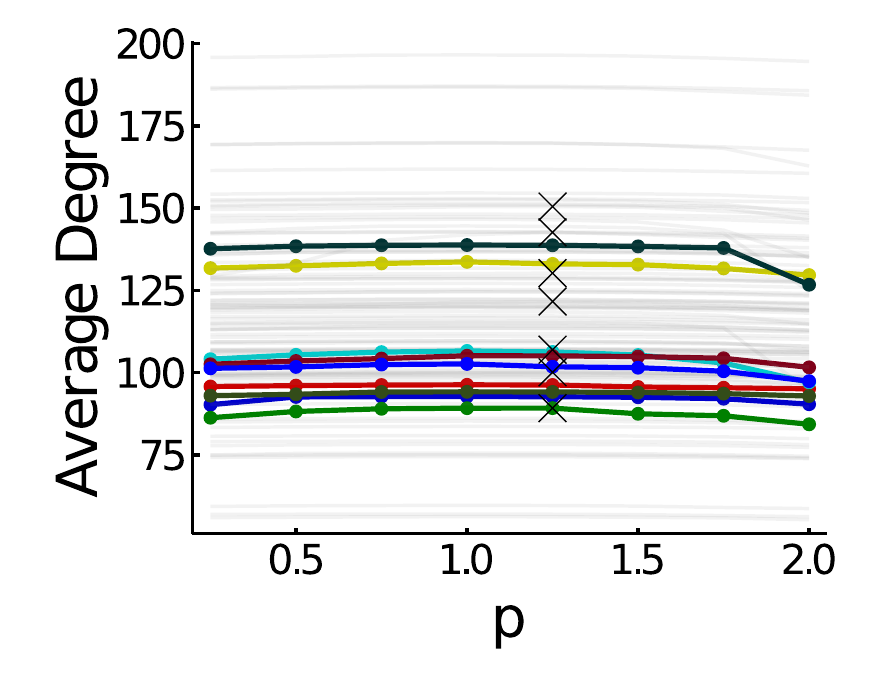}}\hfill
	\subfloat[Size\label{fig:size}] 
	{\includegraphics[width=.45\linewidth]{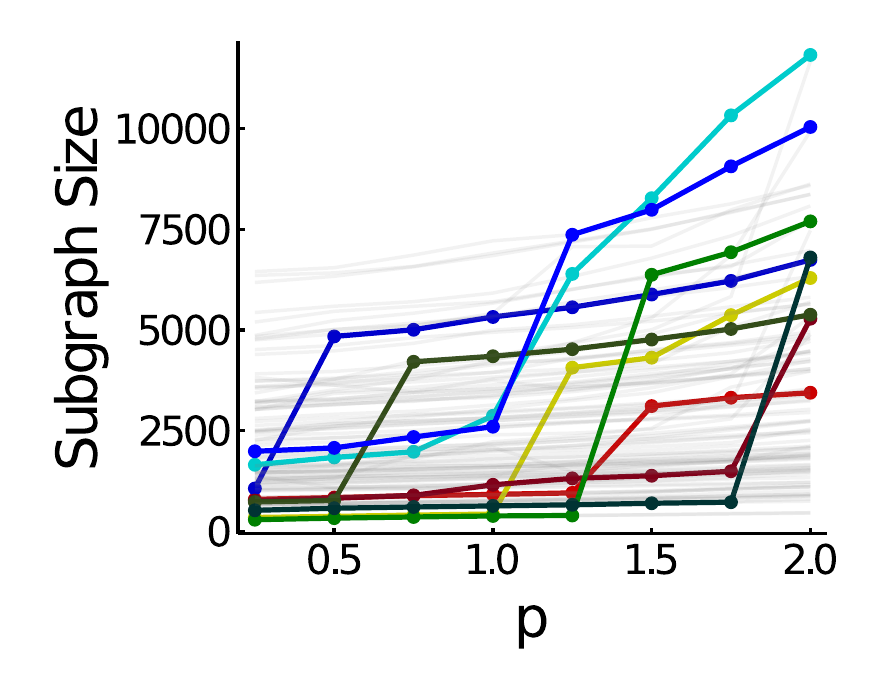}}\hfill
	\vspace{-.5\baselineskip}
	\subfloat[Max Degree \label{fig:max}]
	{\includegraphics[width=.45\linewidth]{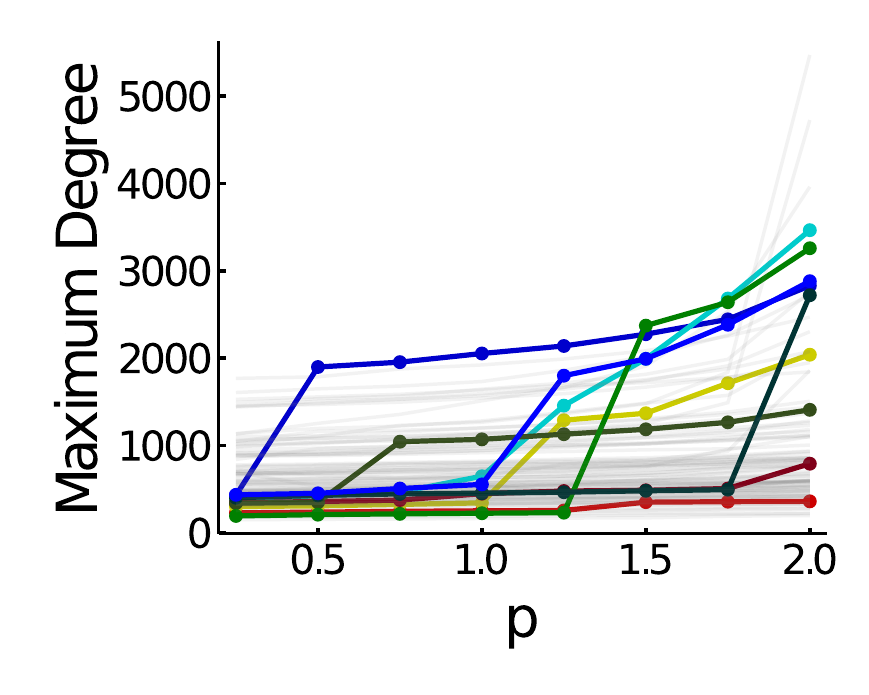}}\hfill
	\subfloat[Edge Density \label{fig:density}] 
	{\includegraphics[width=.45\linewidth]{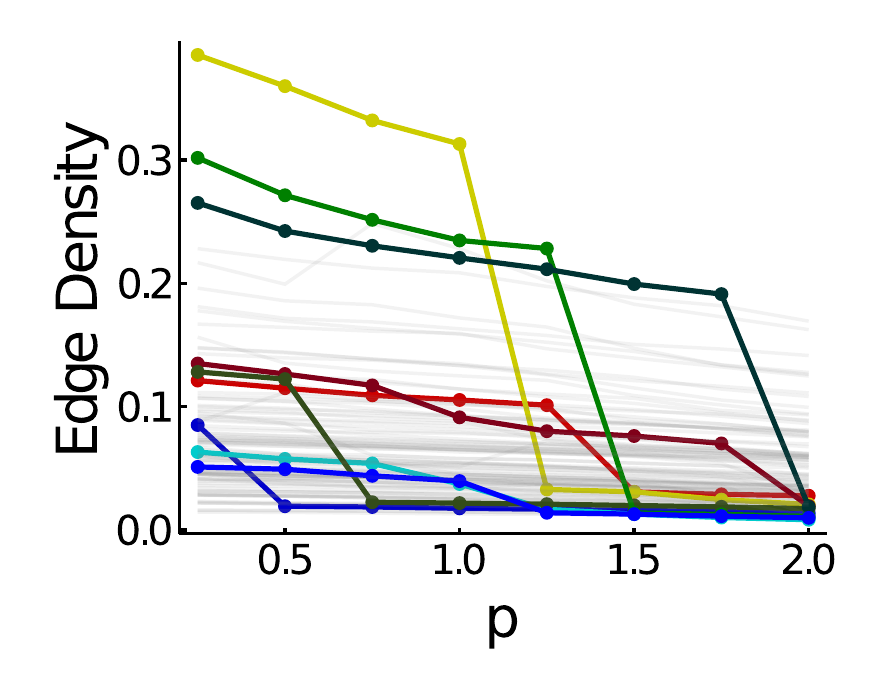}}\hfill
	%\vspace{-.5\baselineskip}
	\caption{We apply \textsc{GenPeel} on 100 subsets of Facebook for $p \in [0.25,2]$. Each line corresponds to one of the 100 FB networks, the ten networks that change the most in size are highlighted in color to illustrate key trends. There is significant overlap in sets; typically the subgraph obtained for a value $p$ contains most of the subgraph for $p_0 < p$. In this way our method is effectively uncovering a nested hierarchy of dense subgraphs that are related but whose properties change depending on the desired notion of density. (a) In general average degree changes very little, but the subgraphs vary in size (b), maximum degree (c), and edge density  $\nicefrac{|E_S|}{{|S| \choose 2}}$ (d), as we vary $p$. In several cases, we obtain a slightly higher average degree with \text{GenPeel} when $p = 1.25$ ($X$ marks in top left plot) than when $p = 1$, the case where our method reduces to the standard peeling algorithm.}
	\label{fig:fb}
	%\vspace{-\baselineskip}
\end{figure}

\section{Related Work}
\label{sec:related}
In order to situate our research within a broader context, we briefly review other related work on dense subgraph discovery. The standard densest subgraph problem is known to be polynomial time solvable via reduction to a maximum $s$-$t$ flow problem~\cite{gallo1989fast,goldberg1984finding}. When the goal is to find the densest subgraph on $k$ nodes, the problem becomes NP-hard~\cite{feige2001dense}. A number of methods for finding dense subgraphs focus specifically on greedy peeling algorithms. The standard peeling algorithm for dense subgraph discovery was first proposed by Ashahiro et al.~\cite{asahiro2000greedily} as a means to approximate the latter task, but was later shown by Charikar~\cite{charikar2000greedy} to provide a $1/2$-approximation to the unrestricted objective. This approximation was later adapted by Khuller and Saha~\cite{khuller2009finding} to provide an approximation in the case of directed graphs, a variant that can also be solved in polynomial time. A number of generalizations of the objective have recently been considered, including the $k$-clique densest subgraph~\cite{tsourakakis2015k}, the $\textbf{F}$-density objective~\cite{farago2008general}, the $f$-densest subgraph~\cite{kawase2018}, and generalizations involving signed graphs~\cite{tsourakakis2019novel}, bipartite graphs~\cite{sariyuce2018}, and fairness constraints~\cite{aris2020fairdense}.
Outside of direct variations on the densest subgraph problem that we generalized,
%there are a number of 
other formalisms for similar dense subgraphs include % , including
nucleus decompositions~\cite{sariyuce2015finding},
quasi-cliques~\cite{tsourakakis2013denser},
trusses~\cite{cohen2008trusses},
braces~\cite{ugander2012structural},
DN-graphs~\cite{wang2010triangulation},
plexes~\cite{seidman1978graph}, and
clubs and clans~\cite{mokken1979cliques}.
For additional related work, we refer to the survey by Lee et al.~\cite{lee2010survey} and a tutorial from Gionis and Tsourakakis~\cite{gionis2015dense}.

%Uncertain graphs, signed graphs~\cite{tsourakakis2019novel}
%K-clique densest subgraph~\cite{tsourakakis2015kclique}
%Denser than densest~\cite{tsourakakis2013denser}
%Tutorial:\cite{gionis2015dense}
%Survey:\cite{lee2010survey}
%Flowless, best of both worlds\cite{boob2020flowless}.
%
%Dense k subgraph: 
%Fair densest subgraph: 
%Massive graphs: \cite{gibson2005discovering}
%
%Locally densest subgraphs: \cite{qin2015locally}
%K-core like stuff. Local lags for Hierarchical dense subgraphs: \cite{sariyuce2018local}. Hierarchical : \cite{sariyuce2015finding}

\section{Conclusion and Discussion}
Our $p$-mean densest subgraph objective unifies the standard densest subgraph and maxcore problems, and provides a general framework for capturing different notions of density in the same graph. We have presented several new algorithmic guarantees, including an optimal polynomial time solution based on submodular minimization, and a fast approximation algorithm based on a more sophisticated variant of the standard peeling method. The most compelling direction for future work is to develop computational complexity results and algorithms for $p \in (-\infty, 1)$. Experimental results using the greedy approximation for $p \in (0,1)$ indicate that this regime favors smaller clusters with a higher edge density, which is a particularly attractive property in dense subgraph discovery. A question that remains open even for $p \geq 1$ is whether a single peeling algorithm or alternative ordering method on the nodes can be used to define a nested set of dense subgraphs that can well approximate our objective for a wide range of $p$ values, simply by adding or subtracting additional nodes from the ordering as $p$ changes. In many of our experiments, the greedy method return subgraphs that were often nested or at least nearly nested. This therefore seems like a promising avenue for future theoretical results, or at least new practical techniques that provide a range of dense subgraphs without re-running an algorithm for each value of $p$. 

\begin{acks}
	This research was supported by NSF Award DMS-1830274, ARO Award W911NF19-1-0057, ARO MURI, JPMorgan Chase \& Co, a Vannevar Bush Faculty Fellowship, and a Simons Investigator grant.
\end{acks}

\bibliographystyle{ACM-Reference-Format}
\bibliography{pden-short}
\appendix
\section{Tightness of $p$-GenPeel}
For a graph $G = (V,E)$ and a fixed $p \geq 1$, define
\begin{equation}
\label{deltaG}
\delta(G) = \min_{v \in V} \,\, \left( (d_v)^p + \sum_{u \in \mathcal{N}(v)} d_u^p - (d_u-1)^p\right).
\end{equation}
This is the minimum change to the average $p$th power degree of $G$, and mirrors our definition of $\Delta_j(S)$ in~\eqref{Delta}. In order to find examples where the approximation guarantee of \textsc{GenPeel}-$p$ is tight, we would like to find graphs $G_1$ and $G_2$ such that (i) the $p$-mean densest subgraph of $G_i$ is all of $G_i$ for $i \in \{1,2\}$, (ii) $C = f(G_1)/f(G_2) \rightarrow (p+1)$, and (iii) $\delta(G_1) < \delta(G_2)$. 

If we can satisfy these three requirements, we can build a graph for which \textsc{GenPeel}-$p$ returns an approximation that asymptotically approaches $(p+1)$. To construct such an example, combine a single copy of $G_1$ with a large number of disjoint copies of $G_2$ to form a new graph $G$. The above properties guarantee that $G_1$ is the maximum $p$-densest subgraph of $G$, but the peeling algorithm will nevertheless remove all of its nodes before removing any node from any copy of $G_2$. As long as there are enough copies of $G_2$ (which we can add to $G$ without limit), the maximum $p$-density returned by \textsc{GenPeel}-$p$ will be roughly $f(G_2)$, for an overall approximation of $f(G_2)/f(G_1) = 1/(p+1)$.

\paragraph{Specific Graph Construction}
Let $r$ be an integer we will choose later and $G_2$ be a clique on $(r+1)$ nodes. Note that
\begin{align*}
\delta(G_2) &= r^p + r(r^p - (r-1)^p)\\
f(G_2) &= r^p.
\end{align*}
Next, let $G_1 = G_1(n,k)$ be a graph parameterized by integers $n$ and $k< n/2$, where $n$ is the number of nodes in $G_1$. For each node $i$, introduce an edge from $i$ to $(i+1)$, $(i+2), \hdots b_i$ where $b_i = \max\{i+k,n\}$. This means that most nodes will have degree $2k$, while nodes close to 1 and close to $n$ will have slightly smaller degree. Overall, as long as $n$ is large compared to $k$, we will have $f(G_1) \approx (2k)^p$. More precisely, we can calculate that
\begin{align*}
f(G_1) &= \frac{\sum_{i = 1}^k (k+i-1)^p + \sum_{i = k+1} (2k)^p + \sum_{i = n-k+1}^n (k+i-1)^p}{n} \\
&> \left(1 - \frac{2k}{n}\right) (2k)^p.
\end{align*}
Importantly, $\delta(G_1)$ is nearly the same as $f(G_1)$, and in fact $f(G_1)$ converges to $\delta(G_1)$ if $k$ is fixed and $n \rightarrow \infty$. In detail, note that the minimum degree node is node $1$, and so
\begin{align*}
\delta(G_1) &= k^p + \sum_{i = 2}^{k+1} (k+i-1)^p - (k+i-2)^p = (2k)^p. %\;\; \text{ (note the telescoping sums)}.
\end{align*}

Next, we need to choose a value of $r$ so that the greedy algorithm will opt to remove nodes from $G_1$ before removing nodes from one of the copies of $G_2$. This will happen as long as
\begin{equation*}
\delta(G_2) = r^p + r(r^p - (r-1)^p) > (2k)^p = \delta(G_1).
\end{equation*}
Choosing $r = \left \lceil \frac{2k}{(p+1)^{1/p}} + 1 \right \rceil$,  we have
\begin{align*}
r &\geq \frac{2k}{(p+1)^{1/p}} + 1 \implies (r-1)^p \geq \frac{(2k)^p}{p+1}\\
&\implies (p+1)(r-1)^p \geq (2k)^p = \delta(G_1).
\end{align*}
By Observation~\ref{obs:power}, we know that $p(r-1)^{p-1} \leq r^p - (r-1)^{p}$, and so
\begin{align*}
(p+1)(r-1)^p &= p(r-1)^p + (r-1)^p < pr(r-1)^{p-1} + r^p \\
&<  r (r^p - (r-1)^p) + r^p = \delta(G_2).
\end{align*}
To compute the asymptotic approximation guarantee for \textsc{GenPeel}-$p$, note that $r < 2k/(p+1)^{1/p} + 2$ and so
\begin{align*}
\frac{f(G_1)}{f(G_2)} &> \frac{(1-\frac{2k}{n})(2k)^p}{r^p} > \left(1-\frac{2k}{n}\right)\frac{(2k)^p}{ \left[\frac{2k + 2(p+1)^{1/p}}{(p+1)^{1/p}} \right]^p} \\
&= \left(1-\frac{2k}{n}\right)\frac{(p+1)}{ \left[1+\frac{(p+1)^{1/p}}{k} \right]^p} 
\end{align*}
Therefore, for any fixed finite $p \geq 1$, if $k$ satisfies $k = o(n)$ and $\frac{1}{k} \rightarrow 0$, then this overall quantity converges to $(p+1)$. This means that asymptotically, the average $p$th power degree of $G_1$ is $(p+1)$ times better than the average $p$th power degree of $G_2$, so this is the best approximation guarantee we can obtain after $G_1$ has been deleted. The set with the best density considered by \textsc{GenPeel}-$p$ will be the entire graph $G$, since the density will be slightly better before we delete $G_1$. However, we still have the same asymptotic approximation guarantee, since we can include $n$ copies of $G_2$ when forming $G$, making  $f_p(G)$ asymptotically close to $f_p(G_2)$. For small values of $p$ (e.g., integers up to 10), it is not hard to numerically find examples of graphs with approximation guarantees between $p$ and $p+1$, using this type of graph construction.

\end{document}